\newtheorem{theo}{Theorem}
\newtheorem{cor}[theo]{Corollary}
\newtheorem{prop}[theo]{Proposition}
\newtheorem{mydef}[theo]{Definition}
\newtheorem{lemme}[theo]{Lemma}
\newcommand{\R}{\mathbb{R}}
\newcommand{\N}{\mathbb{N}}
\newcommand{\LazyImp}{Lazy Improvement}
\newcommand{\SIDyn}{$\{SI\}$-Dynamics}
\newcommand{\LDyn}{$\{I, L, 1P\}$-Dynamics}
\newcommand{\ADyn}{$\{SI, A\}$-Dynamics}
\newcommand{\CDyn}{$\{I, L\}$-Dynamics}
\newcommand{\IADyn}{$\{I, A\}$-Dynamics}
\newcommand{\dynamics}{\rightharpoonup}
\newcommand{\dynamicsG}{\underset{G}{\rightharpoonup}}
\newcommand{\ssprim}{(s,s')}
\newcommand{\dyn}[1]{\ensuremath{\xrightharpoonup{#1}}}
\newcommand{\Ldyn}{\dyn{\{I, L, 1P\}}} 
\newcommand{\Odyn}{\dyn{SI, A}}
\newcommand{\Adyn}{\Odyn}
\newcommand{\SIdyn}{\dyn{SI}}
\newcommand{\IAdyn}{\dyn{\{I, A\}}}
\newcommand{\Crazydyn}{\rightsquigarrow}
\newcommand{\CEq}{SNE}
\newcommand{\wo}{\lesssim}
\newcommand{\lo}{<}
\newcommand{\eq}{\sim}
\newcommand{\Hs}[1]{H(#1)}
\newcommand{\Hsi}[1]{H_i(#1)}
\newcommand{\StratG}{Strat_G}
\newcommand{\statesof}[1]{\ensuremath{H_{#1}}}
\newcommand{\outcome}[1]{\ensuremath{\left\langle #1\right\rangle}}
\title{Dynamics and Coalitions in Sequential Games}
\author{Thomas Brihaye %
  \institute{UMONS, Belgium} %
  \email{thomas.brihaye@umons.be}
  \and
  Gilles Geeraerts\thanks{%
    This author was supported by an
    \emph{Advanced A.R.C.} grant `Non-Zero Sum Game Graphs:
    Applications to Reactive Synthesis and Beyond' from the
    Fédération Wallonie-Bruxelles.%
  }%
  \institute{Université libre de Bruxelles, Belgium} %
  \email{gigeerae@ulb.ac.be}
  \and
  Marion Hallet\thanks{%
    This author was supported by an F.R.S./FNRS `aspirant'
    fellowship.%
  }%
  \institute{UMONS, Belgium} %
  \email{marion.hallet@umons.be}
  \and
  Stéphane Le Roux\thanks{%
    This author was supported by the ERC inVEST (279499)
    project.%
  }%
  \institute{Université libre de Bruxelles, Belgium} %
  \email{sleroux@ulb.ac.be}
}
\begin{document}

\maketitle

\begin{abstract}
  We consider $n$-player non-zero sum games played on finite trees
  (i.e., sequential games), in which the players have the right to
  repeatedly update their respective strategies (for instance, to
  improve the outcome wrt to the current strategy profile). This
  generates a dynamics in the game which may eventually stabilise to a
  Nash Equilibrium (as with Kukushkin’s lazy improvement), and we
  argue that it is interesting to study the conditions that guarantee
  such a dynamics to terminate.

  We build on the works of Le Roux and Pauly who have studied
  extensively one such dynamics, namely the Lazy Improvement
  Dynamics. We extend these works by first defining a turn-based
  dynamics, proving that it terminates on subgame perfect equilibria,
  and showing that several variants do not terminate. Second, we
  define a variant of Kukushkin’s lazy improvement where the players
  may now form coalitions to change strategies. We show how properties
  of the players' preferences on the outcomes affect the termination
  of this dynamics, and we thereby characterise classes of games where it
  always terminates (in particular two-player games).
\end{abstract}

\section{Introduction}

Since the seminal works of Morgenstern and von Neuman in the forties
\cite{vNM44}, game theory has emerged as a prominent paradigm to model
the behaviour of rational and selfish agents acting in a competitive
setting. The first and main application of game theory is to be found
in the domain of economics where the agents can model companies or
investors who are competing for profits, to gain access to market,
etc. Since then, game theory has evolved into a fully developed
mathematical theory and has recently found many applications in
computer science. In this setting, the agents usually model different
components of a computer system (and of its environment), that have
their own objective to fulfil. For example, game theory has been
applied to analyse peer-to-peer file transfer protocols \cite{Nisan}
where the agents want to maximise the amount of downloaded data in
order to obtain, say, a given file; while minimising the amount of
uploaded data to save bandwidth. Another application is that of
controller synthesis where the two, fully antagonistic, agents are the
system and its environment, and where the game objective models the
control objective.

The most basic model to describe the interaction between the players
is that of games in \emph{strategic form} (aka matrix games), where
all the players chose simultaneously an action (or strategy) from a
finite set, and where they all obtain a payoff (usually modelled by a
real value) which depends on their joint choice of actions (or
strategy profile).  Strategic games are one-shot games, in the sense
that the players play only one action, simultaneously. An alternative
model where players play in turn is that of
\emph{sequential games}, which we consider in this work. Such a game
is played on a finite tree whose inner nodes are labelled by players,
whose edges correspond to the possible actions of the players, and
whose leaves are associated with \emph{outcomes}. The game starts at
the root, and, at each step of the game, the player who owns the
current node picks an edge (i.e. an action) from this node, and the
game moves to the destination node of the edge. The game ends when a
leaf is reached, and the players obtain the outcome associated with
that leaf. Moreover, each player has a \emph{preference} over the
possible outcomes, and each player's aim is thus to obtain the best
outcome according to this preference relation.

Arguably the most natural question about these games is to determine
how rational and selfish players would act (assuming that they have
full knowledge of the other players's possible actions and of the
outcomes). Several answers to this question have been provided in the
literature, such as the famous notion of Nash Equilibrium
\cite{Nash50}, which is a situation (strategy profile) in which no
player has an incentive to change his choice of action alone (because
such a choice would not be profitable to him). Apart from Nash
equilibria, other \emph{solution concepts} have been proposed like
Subgame Perfect Equilibria. This traditional view on game theory can
be qualified as \emph{static}, in the sense that all players chose
their strategies (thereby forming a strategy profile that can qualify
as one of the equilibria listed above), and the game is then played
\emph{once} according to these strategies. It is, however, also
natural to consider a more \emph{dynamic} setting, in which the
players can play the game repeatedly, updating their respective
strategies after each play, in order to try and improve the outcome at
the next play.

\paragraph{Contribution} In this paper, we continue a series of works
that aim at connecting these static and dynamic views. That is, we
want to study (in the case of extensive form games) the long term
behaviour of the dynamics in which players update repeatedly their
strategies and characterise when such dynamics converge, and to what
kind of strategy profiles (i.e., do these stable profiles correspond
to some notions of equilibria?). Obviously, this will depend on how
the players update their strategies between two plays of the game. Our
results consist in identifying minimal conditions on the updates
(modelling potential rational behaviours of the players) for which we
can guarantee convergence to some form of equilibria after a bounded
number of updates. Our work is an extension of a previous paper by Le
Roux and Pauly \cite{LeRouxDynamics}, where they study extensively the
so-called \emph{Lazy Improvement Dynamics}. Intuitively, in this
dynamics, a single player can update his strategy at a time, in order
to improve his outcome, and \emph{only} in nodes that have an
influence on the final outcome (lazy update). Their main result is
that this dynamics terminate on Nash equilibria when the preferences
of the players are acyclic.  Our contribution consists in considering
a broader family of dynamics and characterising their
termination. More precisely:
\begin{itemize}
\item We start (Section~\ref{sec:subg-impr-dynam}) by considering all
  dynamics that respect the \emph{subgame improvement} property, where
  players update their strategies only if this yields a better outcome
  in all subgames that are rooted in a node where a change has been
  performed. We argue that this can be regarded as a rational
  behaviour of the players: improving in the subgames can be regarded
  as an incentive, even when this does not improve the global outcome
  of the game. Indeed, such an improvement can turn out to be
  profitable if one of the other players later deviates from its
  current choice (this is the same intuition as the one behind the
  notion of Subgame Perfect Equilibrium).  Note that such dynamics
  have not been considered at all in \cite{LeRouxDynamics}. We show
  that, in all games where the preferences of the players are acyclic,
  these dynamics terminate and the terminal profiles are exactly the
  Subgame Perfect Equilibria of the game.
\item Then, we consider (Section~\ref{sec:cimp-dynamics}) all the
  dynamics that respect the \emph{improvement} property, where all
  players that change their respective strategy improve the outcome
  (from the point of view of their respective preferences). Among
  these dynamics are several ones that have already been studied by Le
  Roux and Pauly \cite{LeRouxDynamics} such as the Lazy Improvement
  Dynamics. We complete the picture (see
  \tablename~\ref{tab:results-I}), in particular we consider the
  dynamics that satisfies the \emph{improvement} and the
  \emph{laziness} properties but does not restrict the update to be
  performed by a single player, contrary to the \emph{Lazy Improvement
    Dynamics} of Le Roux and Pauly. Thus in our dynamics, players play
  lazily but are allowed to form coalitions to obtain an outcome which
  is better for all the players of the coalition. We give necessary
  and sufficient conditions on the preferences of the players, to
  ensure termination (on Strong Nash Equilibria), in several classes
  of games (among which 2 player games).
\end{itemize}

\paragraph{Related works} The most related works is the paper by Le
Roux and Pauly \cite{LeRouxDynamics} that we extend here, as already
explained. This work is inspired by the notions of \emph{potential}
and \emph{semi-potential} introduced respectively by Monderer and
Shapley \cite{MONDERER1996124}; and Kukushkin
\cite{KUKUSHKIN2002306}. Note also that the idea of repeatedly playing
a game and updating the players strategies between two plays is also
behind evolutionary game theory, but in this case, the rules governing
the updates are inspired from Darwinian evolution \cite{MP73,Wei95}.


\section{Preliminaries}
\label{sec:preliminaries}

\paragraph{Sequential games}
We consider \emph{sequential games}, which are $n$-player non-zero sum
games played on finite trees. \figurename~\ref{fig:twoplayer} shows an
example of such a game, with two players denoted $1$ and
$2$. Intuitively, each node of the tree is controlled by either of the
players, and the game is played by moving a token along the branches
of the tree, from the root node, up to the leaves, which are labelled
by a payoff (in this case, $x$, $y$ or $z$). The tree edges are
labelled by the actions that the player controlling the origin node can
play. For example, in the root node, Player $1$ can chose to play $r$,
in which case the game reaches the second node, controlled by Player
$2$, who can chose to play $l$. The payoff for both players is then
$y$. We also associate a preference relation with each player that
indicates how he ranks the payoffs. In the example of
\figurename~\ref{fig:twoplayer}, Player~$1$ prefers $z$ to $x$ and $x$
to $y$ (noted $y\prec_1 x \prec_1 z$), and Player~$2$ prefers $y$ to
$z$ and $z$ to~$x$.
Let us now formalise the basic notions about these games. The
definitions and notations of this section are inspired from
\cite{Osborne}.

\begin{mydef}\label{def:seqgame}
  A \emph{sequential} or \emph{extensive form} game $G$ is a tuple
  $\langle N, A, H, O, d, p, (\prec_i)_{i\in N} \rangle$ where:
  
  \begin{itemize}
  \item $N$ is a non-empty finite set of \textbf{players};
  \item $A$ is a non-empty finite set of \textbf{actions};
  \item $H$ is a finite set of finite sequences of $A$ which is
    prefix-closed. That is, the empty sequence $\varepsilon$ is a
    member of $H$; and $h = a^1\ldots a^k \in H$ implies that
    $h' = a^1 \ldots a^{\ell} \in H$ for all $\ell<k$.  Each member of
    $H$ is called a node. A node $h= a^1\ldots a^k\in H$ is
    \emph{terminal} if $\forall a\in A$, $a^1\ldots a^k a\notin
    H$. The set of terminal nodes is denoted by $Z$.
  \item $O$ is the non-empty set of \textbf{outcomes},
  \item $d : H\setminus Z \to N$ associates a player with each
    nonterminal node;
  \item $p : Z \to O$ associates an outcome with each terminal
    node;
  \item For all $i \in N$: $\prec_i$ is a binary
    relation over $O$,
    modelling the preferences of Player $i$. We write $x\prec_i y$ and
    $x\nprec_i y$ when $(x,y)\in \prec_i$ and $(x,y)\not\in\prec_i$
    respectively. 
  \end{itemize}
\end{mydef}

\begin{figure}
  \centering
  \begin{tikzpicture}
    
    \draw (2,3) node [circle, draw, inner sep=3pt] (a) {$1 $};
    \draw (1,2) node (aa) {$x $};
    \draw (3,2.) node  [circle, draw, inner sep=3pt] (ab) {$2 $};
    \draw (2,1) node  (abc) {$y $};
    \draw (4,1) node (aba) {$z $};
    
    \draw (1.3,2.6)  node {$ l $};
    \draw (2.7,2.6)  node {$ r $};
    \draw (2.3,1.6)  node {$ l $};
    \draw (3.7,1.6)  node {$ r $};

    \draw  (a) to (aa);
    \draw  (a) to (ab);
    \draw  (ab) to (abc);
    \draw  (ab) to (aba);
    
    \node[right of=ab, node distance=3cm]{
      $
      \begin{array}{c}
        y\prec_1 x \prec_1 z\\
        x\prec_2 z\prec_2 y
      \end{array}
      $
    } ;
  \end{tikzpicture}
  \caption{A sequential game with two players.}\label{fig:twoplayer}\label{fig:assocgraph}

\end{figure}
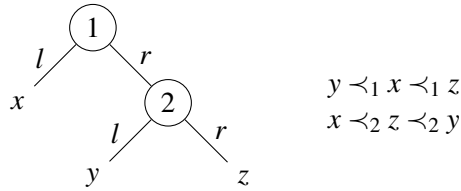

From now on, we fix a sequential game
$G=\langle N, A, H, O, d, p, (\prec_i)_{i\in N} \rangle$. Then, we let
$\statesof{i} = \{h \in H\setminus Z\mid d(h) =i\}$ be the set of
\textbf{nodes} belonging to player $i$.  A \textbf{strategy}
$s_i: \statesof{i} \to A $ of player $i$ is a function associating an
action with all nodes belonging to player $i$, s.t. for all
$h\in\statesof{i}$: $hs_i(h)\in H$, i.e., $s_i(h)$ is a legal action
from $h$. Then, a tuple $s = (s_i)_{i\in N}$ associating one strategy
with each player is called a \textbf{strategy profile} and we let
$\StratG$ be the set of all strategy profiles in $G$.  For all
strategy profiles $s$, we denote by
$\outcome{s}$ the \textbf{outcome} of $s$, which is the outcome of the
terminal node obtained when all players play according to
$s$. Formally, $\outcome{s}=p(h)$ where $h =a^1\ldots a^k\in Z$ is
s.t.  for all $0\le \ell < k $: $d(a^1\ldots a^\ell)=i$ implies
$a^{\ell+1}=s_i(a^1\ldots a^\ell)$.  Let $s = (s_i)_{i\in N}$ be a
strategy profile, and let $s^*_j$ be a Player $j$ strategy. Then, we
denote by $(s_{-j}, s_j^*)$ the strategy profile
$(s_1, ..., s_{j-1}, s^*_j, s_{j+1}, ..., s_{|N|})$ where all players
play $s_i$, except Player $j$ who plays $s^*_j$. Let $s$ be a strategy
profile, and let $h\in H\setminus Z$ be a nonterminal node. Then,
we let:
\begin{inparaenum}[(1)]
\item $H|_{h} = \{h' \mid hh' \in H\}$ be the \textbf{subtree} of $H$
  from $h$;
\item $s|_{h}$ be the \textbf{substrategy
  profile} of $s$ from $h$ which is s.t. $\forall hh'\in H\setminus Z$:
$s(hh')=s|_h(h')$; and
\item
  $G|_h =\langle N, A, H|_h, O, d|_h, p|_h, (\prec_i)_{i\in N}
  \rangle$ be the \textbf{subgame} of $G$ from $h$.  
\end{inparaenum}
Since a strategy profile $s=(s_i)_{i\in N}$ fixes a strategy for all
players, we abuse notations and write, for all nodes $h\in H$, $s(h)$
to denote the action $s_{d(h)}(h)$, i.e. the action that the owner of
$h$ plays in $h$ according to $s$.  Then, we say that a node
$h = a^1... ~a^k$ \textbf{lies along} the play induced by $s$ if $s(\varepsilon) = a^1$ and 
$\forall 1\le \ell<k$, $s(a^1\ldots a^{\ell})=a^{\ell+1}$.  As an
example, let us consider again the game in
\figurename~\ref{fig:twoplayer}. In this game, both players can chose
either $l$ or $r$ in the nodes they control. So, possible strategies
for Player $1$ and $2$ are $s_1$ s.t. $s_1(\varepsilon)=r$ and $s_2$
s.t. $s_2(r)=l$ respectively. Then, $\outcome{(s_1,s_2)}=y$. Observe
that, in our examples, we denote a profile of strategy by the actions
taken by the players. For example, we denote the profile
$s = (s_1,s_2)$ by $rl$. With this notation, this game has four
strategy profiles : $rr$, $rl$, $lr$ and $ll$.


\paragraph{Equilibria} Now that we have fixed the notions of games and
strategies, we turn our attention to three classical notions of
\emph{equilibria}. First, a strategy profile $s^*$ is a \textbf{Nash
  Equilibrium} (NE for short) if for all players $i\in N$, for all
strategies $s_i$ of player $i$:
$ \outcome{ (s^*_{-i}, s^*_i)} \nprec_i \outcome{ (s^*_{-i}, s_i)}.  $
It means that, in an NE $s^*$, no player has interest to deviate alone
from his current choice of strategy (because no such possible
deviation is profitable to him). On the other hand, a strategy profile
$s^*$ is a \textbf{Subgame Perfect Equilibrium} (SPE for short) if,
for all players $i\in N$, for all strategies $s_i$ of player $i$, for
all nonterminal nodes $h\in H_i$ of player $i$:
$ \outcome{(s^*_{-i}|_h, s^*_i|_h)}\nprec_i\outcome{(s^*_{-i}|_h,
  s_i|_h)}.  $ In other words, $s^*$ is an NE in every subgame of $G$.
Finally, in \cite{Aumann59}, Aumann defines a \textbf{Strong Nash
  Equilibrium} (SNE for short) as a strategy profile in which there is
no \emph{coalition} of players that has an incentive to deviate,
because such a deviation would allow \emph{all players} of the
coalition to improve the outcome.  Formally, a strategy profile $s^*$
is an SNE if, for all coalitions of players $I\subseteq N$, for all
strategy profiles $S_I$ of the coalition $I$, there is $i\in I$ s.t.:
$\outcome{(s^*_{-I}, s^*_I)}\not\prec_i\outcome{(s^*_{-I},s_I)}$.
Thus, the notion of SNE is stronger than the notion of NE. Actually,
the notion of SNE has sometimes been described as `too strong' in the
literature, because there are few categories of games in which SNEs
are guaranteed to exist (contrary to NEs and SPEs). This has prompted
other authors to introduce alternative solution concepts such as
Coalition Proof Equilibria \cite{BERNHEIM19871}.

For example, in the game in \figurename~\ref{fig:twoplayer}, the only
NE is $ll$. Moreover, it is also an SPE because $l$ is an NE in the
only subgame of $G$. However, $ll$ is not an SNE, because if both
players decide to form a coalition and play the $r$ action, they
obtain $z$ as outcome which is better than $x$ for both of them. There
is actually no SNE in this game. If we consider the same game with
following preferences for Player~$2$: $x\prec_2 y \prec_2 z$, then
$ll$ is still an NE, but not an SPE. The only SPE of this game is $rr$
(which is also an NE \emph{and} an SNE).

\paragraph{Dynamics: general definitions}
Let us now introduce the central notion of the work, i.e. dynamics in
extensive form games. As explained in the introduction, we want to
study the behaviour of the players when they are allowed to repeatedly
update their current strategy in a strategy profile, and characterise
the cases where such repeated updates (i.e. dynamics) converge to one
of the equilibria we have highlighted above. More specifically, we
want to characterise when a dynamics terminates for a game $G$, i.e.,
when players cannot infinitely often update their strategy.

Formally, for a sequential game $G$, a \textbf{dynamics} $\dynamicsG$
is a binary relation over $\StratG\times \StratG$, where, as usual, we
write $s \dynamicsG{} s'$ whenever $(s, s') \in
\dynamicsG$. Intuitively, $s\dynamicsG{} s'$ means that the dynamics
$\dynamicsG{}$ under consideration allows the strategy profile $s$ to
be updated into $s'$, by the change of strategy of $0$, $1$ or several
players. When the context is clear, we drop the name of the game $G$
and write $\dynamics$ instead of $\dynamicsG$. Given this definition,
it is clear that a dynamics $\dynamics$ corresponds to a directed
graph $(\StratG, \dynamics)$, where $\StratG$ is the set of graph
nodes, and $\dynamics$ is its set of edges. For example, the graphs in
\figurename~\ref{fig:graphs} are some possible graphs representing
dynamics associated to the game in \figurename~\ref{fig:twoplayer}.
Then, we say that the dynamics~$\dynamics$ \textbf{terminates} if
there is no infinite sequence of strategy profiles $(s^k)_{k\in \N}$
such that $\forall k \in \N$, $s^k \dynamics s^{k+1}$.  Equivalently,
$\dynamics$ terminates iff its corresponding graph is
acyclic. Intuitively, a dynamics terminates if players can not update
their strategy infinitely often, which means that the game will
eventually reach stability.  We say that a strategy profile $s$ is
\textbf{terminal} iff there is no $s'$ s.t. $s\dynamics s'$ (i.e., $s$
is a deadlock in the associated graph). Finally, given a pair of
strategies $s$ and $s'$, we let
$\Hs{s,s'} = \{h \in H| s(h) \neq s'(h)\}$ (resp.
$\Hsi{s,s'} = \{h \in H_i| s(h) \neq s'(h)\}$) be the set of nodes
(resp. the set of nodes belonging to player $i$) where the player who
owns the node plays differently according to $s$ and $s'$. Moreover,
$d(\Hs{s,s'})=\{d(h)\in N\mid h \in \Hs{s,s'}\} $ is the natural
extension of $d$ to subsets of $H$.

Let us now identify peculiar families of dynamics (of whom we want to
characterise the terminal profiles), by characterising how players
update their strategies from one profile to another.
\begin{mydef}[Properties of strategy updates]\label{def:properties}
  Let $s, s'$ be two strategy profiles of a game $G$. Then:
  \begin{enumerate}
  \item $\ssprim$ verifies the \textbf{Improvement Property}, written
    $\ssprim \models I$ if $\forall i\in N$: $s_i\neq s'_i$ implies
    $ \outcome{s} \prec_i\outcome{s'}$. That is, every player that
    changes his strategy improves his  payoff.
  \item $\ssprim$ verifies the \textbf{Subgame Improvement Property},
    written $\ssprim \models SI$, if $\forall i\in N$, $\forall h \in
    \Hsi{s,s'}$: $\outcome{s|_h } \prec_i \outcome{ s'|_h}$. That is,
    every player that changes his strategy improves his (induced)
    payoff in all the subgames rooted at one of the changes.
  \item $\ssprim$ verifies the \textbf{Laziness Property}, written
    $\ssprim \models L$, if $\forall h \in \Hs{s,s'}$, $h$ lies along
    the play induced by $s'$. Intuitively, we consider such updates as
    \emph{lazy} because we require that the players do not change
    their strategy in nodes which do not influence the payoff.
  \item $\ssprim$ verifies the \textbf{One Player Property}, written
    $\ssprim \models 1P$, if $\exists i\in N$ such that
    $\forall j \neq i$, $s_j = s_j'$. That is, at most one player
    updates its strategy (but he can perform changes in as many nodes
    as he wants).
  \item $\ssprim$ verifies the \textbf{Atomicity Property}, written
    $\ssprim \models A$, if $\exists h^* \in H$ such that
    $\forall h \neq h^*$, $s(h) = s'(h)$. That is, the change affects
    at most one node of the tree.
  \end{enumerate}
  
\end{mydef}
Note that, except for the first two properties, those requirements do
not depend on the outcome. We argue that the first three properties
(Improvement, Subgame Improvement and Laziness) correspond to some
kind of rational behaviours of the players, who seek to improve the
outcome of the game, while performing a minimal amount of changes. On
the other hand, the One Player Property is relevant because such
updates do not allow players to form coalitions to improve their
outcomes. Finally, the Atomicity Property is interesting \textit{per
  se} because it corresponds to some kind of \emph{minimal update},
where a single choice of a single player can be changed at a
time. Because of that, dynamics respecting this Atomicity will be
useful in the rest of the papers to establish general results on
Dynamics.

Based on these properties, we can now define the dynamics that we will
consider in this paper. For all $x\in \{I, SI, L, 1P, A\}$, we define
the $x$-Dynamics as the set of all pairs $(s,s')$ s.t.
$(s,s')\models x$. Intuitively, the $x$-Dynamics is the most
permissive dynamics where the players update their strategies
respecting $x$. For a set $X\subseteq \{I, SI, L, 1P, A\}$, we define
the $X$-dynamics as the intersection of all the $x$-dynamics for
$x\in X$. Throughout the paper, we denote by $\dyn{X}$ the
$X$-Dynamics.

Observe that, following Definition~\ref{def:properties}, any update
satisfying the Atomicity Property also satisfy the One Player
Property. However, no such implication exist in general between the
other properties:
\begin{lemme}\label{lemma : relation between properties}
  Let $s$ and $s'$ be two strategies of a game $G$. Then,
  $(s,s')\models A$ implies that $(s,s')\models 1P$.

  On the other hand, for all $x,y\in \{I, SI, L, 1P, A\}$ s.t.
  $(x,y)\neq (A, 1P)$ and $x\neq y$, there exists a pair of strategies
  $s$ and $s'$ and a game $G$ s.t.: $(s,s')\models x$ and
  $(s,s')\not\models y$.
\end{lemme}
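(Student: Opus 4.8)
The plan is to treat the two assertions separately: the first by a direct verification, the second by exhibiting a small family of counterexample games, one for each value of the satisfied property.

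For the implication, suppose $\ssprim \models A$. Then there is a node $h^*$ with $s(h) = s'(h)$ for every $h \neq h^*$, so $\Hs{s,s'} \subseteq \{h^*\}$ and hence $d(\Hs{s,s'}) \subseteq \{d(h^*)\}$. Thus at most the single player $i = d(h^*)$ has $s_i \neq s_i'$, which is exactly the One Player Property (and if $\Hs{s,s'} = \emptyset$ then no player changes and $1P$ holds vacuously).

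For the $19$ non-implications (all ordered pairs $(x,y)$ with $x \neq y$ except $(A,1P)$), I would organise the construction by the satisfied property $x$, producing for each $x$ one small tree and a pair $(s,s')$ satisfying $x$ while simultaneously violating every other relevant $y$; a final table listing the five truth values per example then confirms that all pairs are covered. The guiding idea is that a change made at a node \emph{off} the play induced by $s'$ leaves $\outcome{s'}$ unchanged, which decouples the global outcome from the subgame outcomes: such a change can refute $L$ (the node does not lie along the $s'$-play) and $SI$ (its subgame worsens) while leaving $I$ under control, and adding a second mover or a second changed node refutes $1P$ and $A$. Thus for $x=A$ I change a single off-path action for the worse (failing $I$, $SI$, $L$); for $x=1P$ a single player changes two off-path actions for the worse (failing $I$, $SI$, $L$, $A$); for $x=SI$ two players each strictly improve a subgame rooted at an off-path node, so the global outcome is unchanged (failing $I$) and $L$, $1P$, $A$ fail; and for $x=L$ two players change on-path actions that worsen the outcome (failing $I$ and $SI$), with $1P$ and $A$ failing from the two movers at two nodes.

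The delicate case, and the main obstacle, is $x=I$: I must maintain a strict global improvement for every mover while making $SI$ fail. I would combine an on-path change by Player $1$ that strictly improves the final outcome for both players with an off-path change by Player $2$ at a node $h$ whose subgame outcome worsens for Player $2$. Since $h$ is off the $s'$-play it does not affect $\outcome{s'}$, so both movers still improve globally and $I$ holds, yet $h \in \Hsj{2}{s,s'}$ with $\outcome{s|_h} \not\prec_2 \outcome{s'|_h}$ refutes $SI$; the two movers at two distinct nodes refute $1P$ and $A$, and the off-path node refutes $L$. The only point requiring care throughout is to choose the preferences $\prec_i$ irreflexive, so that an unchanged global outcome indeed yields $\outcome{s} \not\prec_i \outcome{s'}$ and thereby falsifies $I$ wherever needed.
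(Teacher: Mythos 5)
Your proof is correct, and it is noticeably more thorough than the paper's own. The first part coincides with the paper's argument: an atomic update touches at most one node, hence at most one player's strategy (your remark on the vacuous case is a welcome precision). For the second part, however, the paper only treats the pair $(I,SI)$ explicitly --- in the game of \figurename~\ref{fig:twoplayer} it checks that $(ll,rr)\models I$ but $(ll,rr)\not\models SI$, and that $(lr,ll)\models SI$ but $(lr,ll)\not\models I$ --- and declares all remaining pairs to follow immediately from Definition~\ref{def:properties}. You instead organise the $19$ non-implications by the satisfied property and systematically exploit the decoupling of global and subgame outcomes via off-path changes; this is in fact the mechanism implicitly at work in the paper's own witness $(lr,ll)$, which is a degenerate single-mover instance of your $x=SI$ construction (one off-path subgame improvement leaving $\outcome{s'}$ untouched). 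Your scheme buys genuine coverage of cases the paper leaves to the reader: for instance, the paper's $(ll,rr)$ example does \emph{not} refute $(I,L)$, since both changed nodes there lie along the play induced by $rr$, so your $x=I$ construction with an additional off-path deterioration for one of the movers is needed for that pair and works as you describe. You also correctly isolate the one hypothesis that makes the unchanged-outcome refutations of $I$ sound, namely irreflexivity of $\prec_i$ (the paper's preferences are arbitrary binary relations, and its own argument that $(lr,ll)\not\models I$ because $\outcome{lr}=x=\outcome{ll}$ silently uses $x\not\prec_1 x$). The only thing your write-up omits is the explicit trees and preference tables, but each construction you sketch (e.g., placing the two movers' changed nodes in disjoint off-path subtrees so that the $SI$ conditions can be checked independently, or putting one change at the root so that a worsened global outcome directly falsifies $SI$) is routinely realisable; this is a presentational rather than a mathematical gap.
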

\begin{proof}
  The first point follows from
  Definition~\ref{def:properties}. Indeed, since only one move can be
  made between $s$ and $s'$ (if $(s,s')\models A$), then only one player
  can have updated his strategy.
  
  Let us give a counter example for the lack of implication between
  $I$ and $SI$ (the other cases follow immediately from
  Definition~\ref{def:properties}). Let us consider the game in
  \figurename~\ref{fig:twoplayer}. Then, we claim that
  $(ll,rr)\models I$ but $(ll,rr) \not\models SI$. Indeed, for both
  players, $\outcome{ll} = x \prec \outcome{rr} = z$, but
  $\outcome{ll|_{r}} = y \not\prec_2 \outcome{rr|_r} = z$. On the
  other hand, $(lr,ll) \models SI$ but $(lr,ll)\not\models I$, because
  $\outcome{lr}= x = \outcome{ll}$.
\end{proof}

As a consequence, we obtain direct inclusions between some
dynamics. For instance $\dyn{A}\subseteq\dyn{1P}$ in all games.  In
\cite{LeRouxDynamics}, Le Roux and Pauly consider the so-called
\LazyImp{} Dynamics which corresponds to our $\{I, L,
1P\}$-Dynamics. The underlying idea is to disallow players from making
changes in nodes that are irrelevant (because they will not appear
along the play generated by the profile), while ensuring that the
payoff improves. In \cite{LeRouxDynamics}, Le Roux and Pauly prove
that this dynamics terminates for all games that do not have cyclic
preferences and that the terminal profiles are exactly the Nash
Equilibria.

Examples of graphs associated with dynamics of particular interest (for
the game in \figurename~\ref{fig:twoplayer}) are displayed in
\figurename~\ref{fig:graphs}: the $\{I, L, 1P\}$-Dynamics (or
\LazyImp{} Dynamics) on the left; the $\{SI,A\}$-Dynamics in the
middle (which will be particularly relevant to the discussion in
Section~\ref{sec:subg-impr-dynam}); and the $\{I, L\}$-Dynamics on the
right (which will be particularly relevant in
Section~\ref{sec:cimp-dynamics}).

\begin{figure}
  \begin{center}
	\begin{tikzpicture}[scale=.6]
	 \draw (2,2) node [circle, draw, inner sep=3pt] (s1) {$rr$};
	\draw (4,2) node [circle, draw, inner sep=3pt] (s2) {$rl$};
	\draw (2,0) node [circle, draw, inner sep=3pt] (s3) {$lr$};
	\draw (4,0) node [circle, draw, inner sep=3pt] (s4) {$ll$};

	\draw [->][thick] (s3) to (s1);
	\draw [->][thick] (s1) to (s2);
	\draw [->][thick] (s2) to (s4);
    \end{tikzpicture}      
\hspace*{10mm}
\begin{tikzpicture}[scale=.6]
    
   \draw (2,2) node [circle, draw, inner sep=3pt] (s1) {$rr$};
	\draw (4,2) node [circle, draw, inner sep=3pt] (s2) {$rl$};
	\draw (2,0) node [circle, draw, inner sep=3pt] (s3) {$lr$};
	\draw (4,0) node [circle, draw, inner sep=3pt] (s4) {$ll$};

	\draw [->][thick] (s3) to (s1);
	\draw [->][thick] (s1) to (s2);
	\draw [->][thick] (s2) to (s4);
	\draw [->][thick] (s3) to (s4);
\end{tikzpicture}
\hspace*{10mm}
\begin{tikzpicture}[scale=.6]
    
   \draw (2,2) node [circle, draw, inner sep=3pt] (s1) {$rr$};
	\draw (4,2) node [circle, draw, inner sep=3pt] (s2) {$rl$};
	\draw (2,0) node [circle, draw, inner sep=3pt] (s3) {$lr$};
	\draw (4,0) node [circle, draw, inner sep=3pt] (s4) {$ll$};

	\draw [->][thick] (s3) to (s1);
	\draw [->][thick] (s1) to (s2);
	\draw [->][thick] (s2) to (s4);
	\draw [->][thick] (s4) to (s1);
\end{tikzpicture}

\caption{The graphs corresponding to the \LDyn{} (left),
  \ADyn{} (middle) and \CDyn{} (right),
  respectively, for the game in \figurename~\ref{fig:twoplayer}}
      \label{fig:graphs}
    \end{center}
\end{figure}

The rest of the paper will be structured as follows: in
Section~\ref{sec:subg-impr-dynam}, we will consider dynamics which are
subsets of the $SI$-Dynamics (like the $\{SI, A\}$-Dynamics). In
Section~\ref{sec:cimp-dynamics}, we will consider dynamics which are
subsets of the $I$-Dynamics, in order to complete the results obtained
by Leroux and Pauly in~\cite{LeRouxDynamics}.


\section{Subgame Improvement Dynamics }\label{sec:subg-impr-dynam}
In this section we will focus on dynamics that respect the
\emph{Subgame Improvement Property} (see
Definition~\ref{def:properties}), i.e., dynamics which are subsets of
the $SI$-Dynamics (note that these dynamics have not been considered
at all in \cite{LeRouxDynamics}). More precisely, we will consider all
the $X$-Dynamics s.t.$\{SI\}\subseteq X\subseteq \{SI, A, 1P\}$. Let
us notice that we do not consider here the $L$ property, because we
argue that there is little interest in the $\{SI,
L\}$-Dynamics. Indeed, let us consider the game in
\figurename~\ref{fig:twoplayer}, with the following preferences instead:
$y\prec_1 x \prec_1 z$ and $y\prec_2z\prec_2 x$. Then, we can
update $ll$ into $rr$ with the $\{SI, L\}$-Dynamics. Observe that in
this update, both players update their strategy and thus agree to form
a coalition to perform it. However, this update is not profitable for
player $2$ who obtains a worse outcome: $z$ instead of $x$. This
example shows that the $\{SI, L\}$-Dynamics yields strategy updates
that are not rational.

The central result of this section is that all those dynamics
terminate when the preferences of the players are
acyclic\footnote{Actually, as we show at the end of the section, in
  the presence of players who have cyclic preferences and play lazily,
  the players who have acyclic preferences are still guaranteed to
  perform a finite number of updates only.} and converge to subgame
perfect equilibria, as stated in the follow
theorem:
\begin{theo}\label{thm : termination and SPE terminal profile}
  Let $N, O, (\prec_i)_{i\in N}$ be respectively a set of players, a
  set outcomes and preferences, and let $X$ be s.t.  $SI \in X
  $. Then, the two following statements are equivalent:
  \begin{enumerate}
  \item In \emph{all} games built over $N, O, (\prec_i)_{i\in N}$ the
    $X$-Dynamics terminates;
  \item The preferences $(\prec_i)_{i\in N}$ are acyclic.
  \end{enumerate}
  Moreover, for all $\{SI\}\subseteq X\subseteq \{SI, 1P, A\}$, the
  set of terminal profiles of the $X$-Dynamics is the set of SPEs of
  the game.
\end{theo}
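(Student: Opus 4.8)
The plan is to prove the two parts of the theorem separately. For the equivalence of statements (1) and (2), I would establish both directions. The direction $(2)\Rightarrow(1)$ is the core termination argument: assuming the preferences $(\prec_i)_{i\in N}$ are acyclic, I want to show that the $X$-Dynamics terminates whenever $SI\in X$. Since $X$-Dynamics is by definition the intersection of the component dynamics, it suffices to prove that the (largest) $SI$-Dynamics itself terminates, because every $X$-Dynamics with $SI\in X$ is a subset of it, and a subrelation of a terminating (acyclic) relation is itself terminating. So the key object to attack is the $\{SI\}$-Dynamics alone.

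To show termination of the $\{SI\}$-Dynamics under acyclic preferences, I would look for a well-founded ranking that strictly decreases along every step $s \SIdyn s'$. The natural idea is a bottom-up (leaf-to-root) induction on the tree structure, mirroring the recursive structure of subgame perfection. For each node $h$, the Subgame Improvement Property guarantees that if a player changes his action at some $h\in\Hsi{s,s'}$, then $\outcome{s|_h}\prec_i\outcome{s'|_h}$, i.e. the induced outcome in that subgame strictly improves according to $d(h)$'s acyclic preference. I would try to assign to each profile $s$ a tuple recording, at every node $h$, the outcome $\outcome{s|_h}$ of the induced play in the subgame $G|_h$, and argue that along a dynamics step this collection of subgame-outcomes improves in a way that cannot cycle. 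The subtlety is that changing a strategy high in the tree can alter which subtree is entered, hence change $\outcome{s|_h}$ at many ancestor nodes at once; the improvement guarantee only directly constrains the nodes in $\Hsi{s,s'}$ themselves. The cleanest route is likely an induction on tree height combined with a lexicographic or multiset argument over node-indexed outcomes, using acyclicity to forbid infinite descent. The converse direction $(1)\Rightarrow(2)$, equivalently $\neg(2)\Rightarrow\neg(1)$, should be comparatively easy: given a preference cycle for some player $i$, I would exhibit a small game (a single decision node owned by $i$ whose children realise the outcomes of the cycle, sitting under or beside suitable structure) on which the relevant dynamics loops forever, witnessing non-termination.

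For the second assertion, that the terminal profiles are exactly the SPEs for every $X$ with $\{SI\}\subseteq X\subseteq\{SI,1P,A\}$, I would argue by double inclusion, and here it is crucial that the range of $X$ is bounded above by $\{SI,1P,A\}$, so the dynamics always allows at least atomic single-node, single-player subgame-improving moves. For one inclusion, I show any non-SPE profile $s$ is non-terminal: if $s$ is not an SPE, then by definition there is a player $i$, a node $h\in H_i$, and a deviation that strictly improves $\outcome{s|_h}$; I would take a minimal such violating node (deepest in the tree) so that changing only the single action $s(h)$ already yields a strict improvement in $G|_h$, producing a witness $s'$ with $s\,\dyn{\{SI,1P,A\}}\,s'$ and hence $s\,\dyn{X}\,s'$, since $\dyn{X}\supseteq\dyn{\{SI,1P,A\}}$ is false — so I must be careful and instead note $\dyn{X}\supseteq\dyn{\{SI,A,1P\}}$ because larger index sets give smaller relations; thus the move must be constructed inside the specific $X$. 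The safe argument is: the atomic single-node improving move is legal in the most restrictive dynamics $\dyn{\{SI,A,1P\}}$, and every $X$ in range contains this relation, because $X\subseteq\{SI,A,1P\}$ implies $\dyn{X}\supseteq\dyn{\{SI,A,1P\}}$. Hence a non-SPE is non-terminal in every such $X$. For the reverse inclusion, I show every SPE is terminal: if $s$ is an SPE, then in every subgame no single player can strictly improve, so no $SI$-respecting update exists at all, giving no outgoing edge in the $SI$-Dynamics and a fortiori none in any $X$-Dynamics.

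The main obstacle I anticipate is the termination proof of the $\{SI\}$-Dynamics, specifically finding the right well-founded measure that accommodates the fact that a single dynamics step may simultaneously rewrite actions at several incomparable nodes (the $SI$ property is not restricted to one player by itself). The delicate point is that improving a subgame-outcome at a node $h$ may, via ancestors, re-route the play and change outcomes elsewhere in a way that a naive node-by-node improvement measure does not see as monotone. I expect the resolution to be an induction on the height of the tree: show that in every maximal subtree the induced outcome stabilises after finitely many steps (using acyclicity at its root), then propagate upward, treating stabilised subtrees as fixed leaves for the parent. Getting the interaction between simultaneous multi-node changes and this bottom-up stabilisation precisely right — and confirming that no hidden cycle arises from reachability changes in the tree — will be the crux of the argument.
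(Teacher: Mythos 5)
Your proposal is correct and follows essentially the same route as the paper's: you reduce termination to the $\{SI\}$-Dynamics alone and prove it by a bottom-up structural induction (the paper inducts on the number of nodes, using acyclicity to stabilise a deepest node and then contracting it into a leaf, which is your height induction with stabilised subtrees treated as leaves, in slightly different clothing), you refute termination under cyclic preferences via a one-player game realising the cycle, exactly as the paper does. For the terminal-profile characterisation you also use the paper's sandwich ${\dyn{SI, A}} \subseteq {\dyn{X}} \subseteq {\dyn{SI}}$ (via $A \Rightarrow 1P$), combined with `SPE implies no $SI$-update exists' (Proposition~\ref{prop : SPE eq pour SI}) and `non-SPE implies an atomic $SI$-improving update at a deepest violating node' (Proposition~\ref{prop : SPE Adyn}), which is precisely the paper's argument.
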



 

An important consequence of this theorem is that acyclic preferences
form a \emph{sufficient condition} to ensure termination (on SPEs) of
all $X$-dynamics with $SI\in X$. Observe that this condition is very
weak because it constrains only the preferences, and not the structure
(tree) of the game. We argue that this is also very reasonable, as
acyclic preferences allow to capture many if not most rational
preferences\footnote{Although some authors argue that cyclic
  preferences can be realistic, see Larami and Zakir for an example
  \cite[p.~12]{LZ}. } This condition is, however, not necessary:
Theorem~\ref{thm : termination and SPE terminal profile} tells us that
when the preferences are cyclic, the dynamics does not terminate in
\emph{all} games embedding those preferences. Actually, one can find
examples of games with cyclic preferences where the dynamics still
terminate, and examples where they do not. If we consider the game in
\figurename~\ref{fig:twoplayer}, such that preferences of Player~1 are
now $x\prec_1 y \prec_1 z \prec_1 y$ (and Player~2 as before), then
the dynamics will terminate. However, if we let Player~2 have cyclic
preferences too, with: $x\prec_2 y \prec_2 z \prec_2 y$, then he will
infinitely often change his strategy, because he improves the outcome
(according to his preferences) by changing from $y$ to $z$ \emph{and}
from $z$ to $y$.

This section will be mainly devoted to proving Theorem~\ref{thm :
  termination and SPE terminal profile}. Our proof strategy works as
follows. First of all, we establish termination of the
$\{SI\}$-Dynamics when the preferences are acyclic
(Proposition~\ref{prop : SI terminates}). This guarantees that all the
$X$-Dynamics terminate when $SI\in X$, since all these dynamics are
more restrictive. Second, we show that all SPEs appear as terminal
profiles of the $\{SI\}$-Dynamics (Proposition~\ref{prop : SPE eq pour SI}), without guaranteeing, at that point
that all terminal profiles are SPEs. Finally, we show that all the
terminal profiles of the $\{SI, A\}$-Dynamics are SPEs (Proposition~\ref{prop : SPE Adyn}). We then argue,
using specific properties of this dynamics, and relying on
Definition~\ref{def:properties}, that this implies that the set of
terminal profiles of all the dynamics we consider coincide with the
set of SPEs.

\paragraph{The $\{SI\}$-Dynamics} As announced, we start by two
properties of the $\{SI\}$-Dynamics. The first proposition states
that, for a fixed set of preferences, the $\{SI\}$-Dynamics terminates
in all games built on these preferences iff the preferences are
acyclic. The second shows that, in these cases, the SPEs are contained
in the terminal profiles of the $\{SI\}$-Dynamics.

\begin{prop}\label{prop : SI terminates}
  Let $N, O, (\prec_i)_{i\in N}$ be respectively a set of players, a set outcomes
  and preferences. Then, the two following statements are equivalent:
  \begin{inparaenum}[(1)]
  \item In all games $G$ built over $N, O, (\prec_i)_{i\in N}$,
    the $\{SI\}$-Dynamics terminates;
  \item The preferences  $(\prec_i)_{i\in N}$ are acyclic.
  \end{inparaenum}
\end{prop}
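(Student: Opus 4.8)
The statement is an equivalence, so the plan is to establish the two implications separately, treating ``acyclic $\Rightarrow$ terminates'' as the substantial one. For $(1)\Rightarrow(2)$ I would argue by contraposition and exhibit a non-terminating game whenever some preference is cyclic. So suppose there is a player $i$ and outcomes $o_1\prec_i o_2\prec_i\cdots\prec_i o_m\prec_i o_1$ (all in $O$). I build the one-level game $G$ whose root $\varepsilon$ belongs to $i$ and has exactly $m$ children, all terminal, carrying the outcomes $o_1,\dots,o_m$. Two profiles of $G$ differ only in the action chosen by $i$ at $\varepsilon$; writing $s^{(j)}$ for the profile selecting the leaf with outcome $o_j$, the only node of $\Hsi{s^{(j)},s^{(j+1)}}$ is $\varepsilon$ and $\outcome{s^{(j)}|_\varepsilon}=o_j\prec_i o_{j+1}=\outcome{s^{(j+1)}|_\varepsilon}$ (indices modulo $m$). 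Hence $s^{(j)}\SIdyn s^{(j+1)}$ for every $j$, producing a cycle $s^{(1)}\SIdyn\cdots\SIdyn s^{(m)}\SIdyn s^{(1)}$ in the associated graph, so the $\{SI\}$-Dynamics does not terminate and $(1)$ fails.

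For $(2)\Rightarrow(1)$, I would fix a game $G$ built over the acyclic preferences. The first thing to record is that, since $H$ is finite, only finitely many outcomes occur in $G$, and on a finite set an acyclic relation admits no infinite $\prec_i$-increasing chain (a repeated element would close a $\prec_i$-cycle). The core of the proof is then the following claim, to be established by induction on the height of $h$ in the tree: along \emph{every} $\SIdyn$-run $(s^k)_{k\in\N}$, each node $h$ is modified by its owner only finitely often (i.e.\ $h\in\Hs{s^k,s^{k+1}}$ for finitely many $k$) and the induced outcome $\outcome{s^k|_h}$ is eventually constant.

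The inductive step is where the real work lies, and it is also the main obstacle, because the induced outcome at $h$ can change even when its owner $i=d(h)$ does not move, namely through changes deep inside the subtrees below $h$. To handle this I would use the identity $\outcome{s|_h}=\outcome{s|_{h\,s(h)}}$ together with the induction hypothesis applied to the (finitely many, strictly lower) children of $h$: after some finite stage $K$ all child values $\outcome{s^k|_{h a}}$ are frozen, so past $K$ the value $\outcome{s^k|_h}$ depends only on $i$'s choice at $h$ and therefore changes exactly when $i$ modifies $h$. By the Subgame Improvement Property each such modification satisfies $\outcome{s^k|_h}\prec_i\outcome{s^{k+1}|_h}$, i.e.\ a strict $\prec_i$-increase of a value ranging over the finitely many frozen child outcomes; acyclicity then bounds the number of post-$K$ modifications, while only finitely many steps precede $K$. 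This proves both halves of the claim for $h$, and the base case (children of $h$ all terminal, so $K=0$) is the same argument.

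Finally I would assemble the claim at the root level: the tree has finitely many nodes, each modified only finitely often along any run, and every genuine update step $s^k\neq s^{k+1}$ must modify at least one node; hence no run can be infinite and the $\{SI\}$-Dynamics terminates, which is exactly $(1)$.
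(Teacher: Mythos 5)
Your proof is correct, and while your direction $(1)\Rightarrow(2)$ coincides with the paper's (the same one-level game owned by the cyclic player, with the cycle realised at the root), your direction $(2)\Rightarrow(1)$ takes a genuinely different route. The paper inducts on the \emph{number of nodes of the game}: it picks a bottom node $h^*$ (all successors terminal), argues via $SI$ and acyclicity that its owner eventually commits to a fixed successor $x^*$, then contracts $h^*$ into the leaf $x^*$ and applies the induction hypothesis to the resulting smaller game --- the delicate ``key ingredient'' being that the $\{SI\}$-Dynamics of the contracted game coincides with the tail of the dynamics of the original one. You instead fix a single run and induct on the \emph{height of nodes} within the unchanged game, proving the stronger invariant that every node is modified only finitely often and that every subgame value $\outcome{s^k|_h}$ eventually freezes; the identity $\outcome{s|_h}=\outcome{s|_{h\,s(h)}}$ together with frozen child values reduces post-stabilisation changes at $h$ to strict $\prec_{d(h)}$-improvements ranging over a finite set, and an infinite such chain would repeat a value and close a $\prec_{d(h)}$-cycle, contradicting acyclicity. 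The two arguments share the same engine (bottom-up stabilisation from $SI$ plus acyclicity over finitely many outcomes), but yours avoids the game surgery and the coincidence-of-dynamics lemma entirely, making it more self-contained, at the price of carrying the eventual-constancy invariant up the tree; the paper's contraction, in exchange, yields a reusable game-reduction technique. One point you rightly handle, which the paper leaves implicit: since $(s,s)\models SI$ vacuously, termination must be read over proper updates $s^k\neq s^{k+1}$, exactly as in your final step.
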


\begin{proof}[Sketch of proof]
  Given a cyclic preference, we build a one-player game where all the
  outcomes can be reached from the root (in one step). Since the
  preference is cyclic, the \SIDyn\ does not terminate.
       
  On the other hand, to prove that when preferences are acyclic, the
  dynamics terminates in all games, we make an induction over the
  number of nodes of the game. We consider a game $G$ with $n+1$
  nodes, and in which we have an infinite sequence of strategy
  profiles such that $s^1 \SIdyn s^2 \SIdyn s^3 \SIdyn\ldots$ Then, we
  consider a node $h^* \in H_i$, such that every successor of $h^*$ is
  in $Z$. By the Subgame Improvement Property, if player $i$ changes
  in this place from $x^1$ to $x^2$, he will never come back to $x^1$
  (as preferences are acyclic and he has to improves his payoff at
  this place).  Moreover, as the number of successors of $h^*$ is
  finite, we can consider that, after a finite number of steps in the
  $s^1 \SIdyn s^2 \SIdyn s^3 \SIdyn\ldots$ sequence, Player~$i$ will
  not change anymore in node $h^*$, and definitely choose some
  successor $x^*$ of $h^*$.  Then we can replace $h^*$ by this successor
  $x^*$. We are then in a game with $n$ nodes, in which the dynamics
  terminates by induction hypothesis. The key ingredient is to prove
  that the dynamics coincide in the two games.
\end{proof}

\begin{prop}\label{prop : SPE eq pour SI}
  Let $G$ be a sequential game.  Then, all SPEs of $G$ are terminal
  profiles of $\SIdyn$.
\end{prop}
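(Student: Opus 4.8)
The plan is to prove the contrapositive form of terminality: I assume that $s^*$ is an SPE and show that it admits no genuine outgoing $\SIdyn$-edge, i.e. there is no $s'\neq s^*$ with $(s^*,s')\models SI$. I would first note that a self-loop $s'=s^*$ does not count as an update (the change set is empty and the $SI$ condition holds only vacuously), so terminality for $\SIdyn$ amounts to the absence of a strictly different $SI$-successor. Suppose then, towards a contradiction, that such an update $s'$ exists. Since $s'\neq s^*$, the change set $\Hs{s^*,s'}$ is non-empty, and because $H$ is finite I can pick a node $h^*\in\Hs{s^*,s'}$ that is \emph{maximal} for the prefix order, that is, no proper descendant of $h^*$ belongs to $\Hs{s^*,s'}$. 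Let $i=d(h^*)$ be its owner, so that $h^*\in\Hsi{s^*,s'}$ and $h^*\in H_i$.

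The crucial observation is that, inside the subgame $G|_{h^*}$, the passage from $s^*|_{h^*}$ to $s'|_{h^*}$ is a \emph{unilateral} deviation of player $i$. Indeed, by maximality of $h^*$, no proper descendant of $h^*$ lies in $\Hs{s^*,s'}$, so every player $j\neq i$ owns no change node inside the subtree rooted at $h^*$; hence $s^*_j|_{h^*}=s'_j|_{h^*}$ for all $j\neq i$, which is precisely the identity $s'|_{h^*}=(s^*_{-i}|_{h^*},\,s'_i|_{h^*})$.

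I then confront the two constraints at $h^*$. On the one hand, since $(s^*,s')\models SI$ and $h^*\in\Hsi{s^*,s'}$, the Subgame Improvement Property yields $\outcome{s^*|_{h^*}}\prec_i\outcome{s'|_{h^*}}$. On the other hand, $s^*$ is an SPE and $h^*\in H_i$ is nonterminal, so instantiating the SPE clause at node $h^*$ with the deviating strategy $s'_i$ gives $\outcome{(s^*_{-i}|_{h^*},\,s^*_i|_{h^*})}\nprec_i\outcome{(s^*_{-i}|_{h^*},\,s'_i|_{h^*})}$. Using $(s^*_{-i}|_{h^*},\,s^*_i|_{h^*})=s^*|_{h^*}$ and the identity from the previous paragraph, this reads $\outcome{s^*|_{h^*}}\nprec_i\outcome{s'|_{h^*}}$. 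The two statements contradict each other, so no such $s'$ exists and $s^*$ is terminal for $\SIdyn$.

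The only delicate point, and where I would concentrate the care, is the second paragraph: I must verify rigorously that maximality of $h^*$ together with $h^*\in H_i$ forces $s'|_{h^*}$ to agree with $s^*|_{h^*}$ for every player $j\neq i$, so that the SPE clause — which rules out only \emph{single-player} deviations in subgames — is genuinely applicable to the a priori multi-player update $s'$. Everything else is a direct matching of the $SI$ inequality against the negated SPE inequality. Choosing $h^*$ maximal (rather than, say, closest to the root) is essential here, since it is exactly what collapses the update inside $G|_{h^*}$ to a unilateral one.
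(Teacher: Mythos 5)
Your proof is correct and takes essentially the same route as the paper's (sketched) argument: confront the Subgame Improvement inequality at a change node with the SPE best-response condition in the subgame rooted there, and derive a contradiction. You additionally supply the one detail the paper's sketch leaves implicit --- choosing a \emph{maximal} node of $\Hs{s^*,s'}$ so that the a priori multi-player $SI$ update collapses to a unilateral deviation of the node's owner inside $G|_{h^*}$, which is exactly what makes the SPE clause (which only excludes single-player deviations) applicable.
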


\begin{proof}[Sketch of proof]
  The property follows directly from the Subgame Improvement Property,
  and the definition of SPEs. Indeed, the SI Property requires to
  improves the outcome in the subgame where players change; and SPEs
  require the players to choose the best response in all
  subgames. Then, when a profile is an SPE, it cannot be updated
  without violating SI.
\end{proof}

\paragraph{The $\{SI, A\}$-Dynamics} We now turn our attention to the
$\{SI,A\}$-Dynamics, and show that all its terminal profiles are
necessarily SPEs. As announced, this result will be sufficient to
establish that the other dynamics we consider here terminate on SPEs
too.

\begin{prop}\label{prop : SPE Adyn}
  Let $G$ be a sequential game. Then, all terminal profiles of $\Adyn$
  are SPEs of $G$.
\end{prop}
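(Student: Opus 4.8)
The plan is to prove the contrapositive: if $s$ is not an SPE, then $s$ is not terminal for $\Adyn$. The first observation I would make is that atomicity makes the $SI$ test trivial to satisfy locally: if $(s,s')\models A$ with the unique change occurring at a node $h^\ast$, then $\Hs{s,s'}=\{h^\ast\}$ and $h^\ast\in\statesof{j}$ for $j=d(h^\ast)$, so $(s,s')\models SI$ collapses to the single requirement $\outcome{s|_{h^\ast}}\prec_j\outcome{s'|_{h^\ast}}$. Hence exhibiting an atomic $SI$-improvement out of $s$ is exactly the same thing as exhibiting a profitable \emph{one-shot} deviation: a node $h^\ast$ and an alternative action $b\neq s(h^\ast)$ for its owner $j$ such that rerouting the play at $h^\ast$ to $b$ (and following $s$ everywhere else) strictly improves $\outcome{s|_{h^\ast}}$ for $j$. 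The proposition is therefore the ``only if'' direction of the one-shot deviation principle, and I would establish it along those lines.

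First I would locate the failure. Since an SPE is precisely a profile that is an NE in every subgame, the assumption that $s$ is not an SPE yields a nonempty set $D$ of nonterminal nodes $h$ for which $s|_h$ is not an NE of $G|_h$. I would then pick $h^\ast\in D$ that is \emph{deepest}, i.e.\ maximal for the descendant ordering, so that $s|_{h'}$ is already an NE of $G|_{h'}$ for every strict descendant $h'$ of $h^\ast$. As $s|_{h^\ast}$ is not an NE, some player $k$ has a profitable deviation $s_k$ in $G|_{h^\ast}$, and the next step would be to argue that this deviation must change the action at $h^\ast$ itself, with $k=d(h^\ast)=:j$. Indeed, if the deviation left the action at $h^\ast$ unchanged, the induced play would still pass through $h^\ast s(h^\ast)$, so the very same deviation would already be profitable in the strict subgame $G|_{h^\ast s(h^\ast)}$, contradicting the maximality of $h^\ast$.

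It then remains to shrink this (a priori many-node) deviation into an \emph{atomic} one, and this is the step I expect to be the crux. Writing $b=s_j(h^\ast)\neq s(h^\ast)$ and letting $s'$ be obtained from $s$ by changing only the action at $h^\ast$ to $b$, we have $\outcome{s'|_{h^\ast}}=\outcome{s|_{h^\ast b}}$. The continuation node $h^\ast b$ is a strict descendant of $h^\ast$, so $s|_{h^\ast b}$ is an NE of $G|_{h^\ast b}$ by the choice of $h^\ast$; consequently player $j$ gains nothing by letting $s_j$ keep deviating below $h^\ast b$, so following $s$ after $b$ is at least as good for $j$ as the full continuation. Combining this with the strict gain of the full deviation over $\outcome{s|_{h^\ast}}$ should give $\outcome{s|_{h^\ast}}\prec_j\outcome{s'|_{h^\ast}}$, that is $s\Adyn s'$, so $s$ is not terminal. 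The hard part will be precisely this combination: it compares the three outcomes $\outcome{s|_{h^\ast}}$, the single-node outcome $\outcome{s'|_{h^\ast}}$, and the full-deviation outcome, and it is here that one must invoke that every strict subgame of $G|_{h^\ast}$ is already an NE together with a suitable transitivity-like behaviour of $\prec_j$ to chain the comparisons into the desired strict inequality.
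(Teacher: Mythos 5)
Your route is the same as the paper's: argue by contraposition that any profile $s$ which is not an SPE admits an atomic update allowed by $\Adyn$, i.e., a profitable one-shot deviation. Your scaffolding is correct and makes explicit what the paper's sketch compresses into the single sentence ``there is a node where a player does not make the best choice for his payoff'': atomicity does collapse $SI$ to the one comparison $\outcome{s|_{h^\ast}}\prec_j\outcome{s'|_{h^\ast}}$; picking a deepest node where the subgame-NE property fails is sound; and your divergence argument correctly forces the witnessing deviation to move at $h^\ast$ itself with deviator $j=d(h^\ast)$ (otherwise the play passes through $h^\ast s(h^\ast)$ and maximality is contradicted).

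However, the step you flag as the crux is a genuine gap, and at the level of generality at which the proposition is stated it cannot be closed at all. You must pass from $\outcome{s|_{h^\ast}}\prec_j o_2$ (full deviation, outcome $o_2$) and $\outcome{s'|_{h^\ast}}\nprec_j o_2$ (NE-ness of $s|_{h^\ast b}$) to $\outcome{s|_{h^\ast}}\prec_j\outcome{s'|_{h^\ast}}$. Definition~\ref{def:seqgame} allows $\prec_j$ to be an arbitrary binary relation and Proposition~\ref{prop : SPE Adyn} carries no hypothesis on it, so $\nprec_j$ may express incomparability rather than reverse preference, and the inference fails even for transitive, acyclic preferences. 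Worse, the statement itself then fails: take a single player owning the root (actions leading to leaf $x$ or to node $h$) and node $h$ (actions leading to leaves $y$ and $z$), with $\prec_1=\{(x,z)\}$, which is transitive and acyclic. The profile choosing $l$ at both nodes has outcome $x$ and is not an SPE (deviating at both nodes reaches $z$ and $x\prec_1 z$), yet it is terminal for $\Adyn$: the two atomic updates would require $x\prec_1 y$ and $y\prec_1 z$ respectively, and neither holds --- exactly your three-outcome pattern with $o_1=x$, $o_2=z$, $o_3=y$. So your chaining needs an additional standing assumption making $\nprec_j$ behave like weak reverse preference, e.g.\ strict weak orders (where $o_1\prec_j o_2$ and $o_3\nprec_j o_2$ do yield $o_1\prec_j o_3$, by transitivity of $\prec_j$ and of incomparability) or strict linear orders. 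The paper's own one-line sketch silently makes the same assumption when it speaks of ``the best choice'', which presupposes comparability of the continuations; under such preferences your plan completes and is, in substance, the paper's proof with the one-shot-deviation reduction spelled out.
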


\begin{proof}[Sketch of proof]
  The proof is by contradiction: consider a strategy profile $s$ which
  is not an SPE. Then, there is a node where a player does not make
  the best choice for his payoff. He can thus update his choice in
  this node, and this unique update will improve the outcome, it is
  thus allowed in the $\{SI, A\}$-Dynamics. Thus, all profiles which
  are not SPEs are not terminal for this dynamics.
\end{proof}

Although this will not serve to prove Theorem~\ref{thm : termination
  and SPE terminal profile}, we highlight now an interesting property
of the $\{SI,A\}$-Dynamics: roughly speaking, it weakly simulates the
$\{SI\}$-Dynamics, in the sense that, in all games, every update of
the \SIDyn{} can be split into a sequence of updates of \ADyn.

\begin{lemme}\label{lemma : Simulation Adyn} 
  For all sequential games $G$, for all $(s,s')$ s.t.  $s\SIdyn s'$:
  there are $s^1,\ldots, s^k$ s.t.
  $s\Adyn s^1 \Adyn\ldots\Adyn s^k \Adyn s'$.
\end{lemme}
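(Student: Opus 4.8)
The plan is to proceed by induction on the number of nodes of the tree $H$, reducing a single $\SIdyn$ update to the subgames rooted at the children of the root. First I would observe that an $\SIdyn$ update decomposes nicely along subtrees: if $c$ is a child of the root $\varepsilon$, then the restricted pair $(s|_c, s'|_c)$ is itself an $\SIdyn$ update in the strictly smaller subgame $G|_c$. This is immediate from the definitions, since the change set $\Hs{s|_c,s'|_c}$ of $G|_c$ consists exactly of the change nodes of $\Hs{s,s'}$ lying below $c$, and the subgame outcome $\outcome{(s|_c)|_g}$ computed in $G|_c$ equals the subgame outcome $\outcome{s|_{cg}}$ computed in $G$. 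The induction hypothesis then yields, for each child $c$, a chain of $\Adyn$ steps in $G|_c$ turning $s|_c$ into $s'|_c$. Each such atomic step changes a single node $g$ and improves its owner's outcome in the subgame rooted at $g$; since the subgame outcome at $cg$ is the same whether computed in $G|_c$ or in $G$, the step lifts to a legal $\Adyn$ step of $G$ that changes only the node $cg$ (leaving all other subtrees and the root untouched).

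It remains to also realise the possible change at the root, and this is where the ordering of the lifted steps becomes delicate — indeed, it is the only real obstacle. If the root is a change node, write $i_0 = d(\varepsilon)$, $a = s(\varepsilon)$, $b = s'(\varepsilon)$, and let $c_a, c_b$ be the children of the root reached by $a$ and $b$. The $\SIdyn$ hypothesis at $\varepsilon$ gives $\outcome{s|_{c_a}} \prec_{i_0} \outcome{s'|_{c_b}}$, using $\outcome{s|_\varepsilon} = \outcome{s|_{c_a}}$ and $\outcome{s'|_\varepsilon} = \outcome{s'|_{c_b}}$. A naive bottom-up or top-down schedule fails: updating the non-selected subtree before toggling the root would replace $\outcome{s|_{c_a}}$ by $\outcome{s'|_{c_a}}$, about which $\prec_{i_0}$ says nothing. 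The fix is to arrange the steps so that, at the moment the root is toggled, the chosen subtree $c_b$ is already in its $s'$-configuration while $c_a$ is still in its $s$-configuration. Concretely I would: (1) first play all lifted steps transforming the $c_b$-subtree from $s|_{c_b}$ to $s'|_{c_b}$; (2) then perform the atomic step changing the root action from $a$ to $b$ — at that point its outcome passes from $\outcome{s|_{c_a}}$ to $\outcome{s'|_{c_b}}$, so the step is exactly the improvement certified by $\SIdyn$ and is a legal $\Adyn$ step; (3) finally play the lifted steps transforming every remaining subtree (including $c_a$) to its $s'$-configuration. These last steps touch nodes that are now off the play, but this is harmless, since $\Adyn$ only demands a subgame improvement at the changed node and imposes no laziness constraint. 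If the root is not a change node, I simply process the children's subtrees in any order.

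The base case (a single-node tree, or $s = s'$) is trivial, the decomposition being empty. Composing the three phases yields the desired chain $s\Adyn\cdots\Adyn s'$: after phase (1) the profile agrees with $s'$ on $c_b$ and with $s$ elsewhere, after phase (2) additionally at the root, and after phase (3) everywhere. The main thing to get right — and the crux of the argument — is precisely the scheduling in (1)–(3), which guarantees that each single-node change is evaluated against the subtree configuration for which $\SIdyn$ actually certifies an improvement. The routine parts are checking that restriction to a child preserves $\SIdyn$, and that a subgame outcome is insensitive to changes made outside the relevant subtree.
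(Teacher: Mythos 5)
Your proof is correct, but it takes a genuinely different route from the paper's. The paper argues by induction on $|\Hs{s,s'}|$, the total number of change nodes: it establishes the existence of a single change node $h^*$ such that nothing changes below $s'(h^*)$ (and no higher change node diverts the $s$-play towards $h^*$), applies that one atomic change to obtain $s^1$ with $s \Adyn s^1$, shows that the residual pair $(s^1,s')$ still satisfies $SI$, and recurses; its invariant is thus global — after every atomic step, the entire remaining update is again a legal $\SIdyn$ update. You instead do structural induction on the tree, supported by two ancillary facts (restriction of an $SI$-update to a child subgame is an $SI$-update of that subgame; an atomic $SI$-step in a subgame lifts to one in $G$ because subgame outcomes are insensitive to changes made outside the subtree), together with your three-phase schedule around a change at the root. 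The core scheduling insight is identical in both proofs — a node may only be toggled once the subtree below its $s'$-choice is already in its final configuration, which is precisely the paper's defining property of $h^*$ — but the packaging differs, and each version buys something: your decomposition is more modular and avoids the somewhat fiddly combinatorial existence argument for $h^*$ (whose formal statement in the paper's sketch involves a delicate quantification over the path to $h^*$), while the paper's peel-one-change induction maintains the stronger ``residual-$SI$'' invariant and needs no lifting lemma. Both constructions yield a chain of exactly $|\Hs{s,s'}|$ atomic steps; the degenerate cases ($s=s'$, or a single change) are covered by your trivial base case, noting that $(t,t)\models SI$ and $(t,t)\models A$ hold vacuously, so reflexive padding of the chain is harmless.
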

\begin{proof}[Sketch of proof]
  The proof is by induction over $|\Hs{s,s'}|$, the number of changes
  between $s$ and $s'$. First of all, we prove that there is
  $h^*\in \Hs{s,s'}$ such that nothing changes in $H|_{s'(h^*)}$, and
  there is no node higher than $h^*$ where we leave the path towards
  $h^*$ between $s$ and $s'$.  Formally, if $h^* = a^1\cdots a^k$,
  $s|_{s'(h^*)} = s'|_{s'(h^*)}$ and $\nexists h' \in \Hs{s,s'}$,
  $h' = a^1\ldots a^{\ell}$ ($1<\ell<k$), such that $s(h)=a^{\ell+1}$ and
  $\forall \ell<\ell'<k, s(a^1\ldots a^{\ell'})=a^{\ell'+1}$. The point of
  the existence of such a $h^*$ is that, for $s^1$ such that
  $s^1(h^*) = s'(h^*)$ and for $h\neq h^*, s^1(h)=s'(h)$, we have
  $s\Adyn s^1$, and $(s^1, s')\models SI$. Then, as
  $|\Hs{s^1,s'}| = |\Hs{s,s'}|-1$, we conclude by ind. hyp.
\end{proof}

\paragraph{Proof of Theorem~\ref{thm : termination and SPE terminal
    profile}} Equipped with these three propositions, we can now prove
our theorem:

\begin{proof}[Proof of Theorem~\ref{thm : termination and SPE terminal profile}]
  Let us consider a set of players $N$, a set of outcomes $O$,
  preferences $(\prec_i)_{i\in N}$ and $X$ s.t. $SI\in X$. The
  $X$-Dynamics terminates for every game $G$ built over $N, O,
  (\prec_i)_{i\in N}$ if and only if the preferences are acyclic by
  Proposition~\ref{prop : SI terminates}, because ${\dyn{X}} \subseteq
  {\dyn{SI}}$ by definition.

  Next, let us consider $X$ s.t.
  $\{SI\}\subseteq X\subseteq \{SI, A, 1P\}$. By definition, and using
  the fact that Property $A$ implies Property $1P$ (see
  Lemma~\ref{lemma : relation between properties}), we have:
  ${\dyn{SI, A}} \subseteq {\dyn{X}} \subseteq {\dyn{SI}}$. Let $s$ be a
  terminal profile of $\dyn{X}$. Then, it is also terminal in
  $\dyn{SI,A}$ since ${\dyn{SI, A}} \subseteq {\dyn{X}}$. By
  proposition~\ref{prop : SPE Adyn}, $s$ is thus an SPE of $G$. On the
  other hand, let $s'$ be an SPE of $G$. Then by Proposition~\ref{prop
    : SPE eq pour SI}, $s'$ is a terminal node of $\dyn{SI}$. Since
  ${\dyn{X}} \subseteq {\dyn{SI}}$, $s'$ is also a terminal node of
  $\dyn{X}$. Thus, we have shown that all SPEs of $G$ are terminal
  nodes of $G$ and vice-versa.
\end{proof}

\paragraph{Termination in the presence of `cyclic' players} We close
this section by answering the following question: `what happens when
some players have cyclic preferences and some have not?' We call
cyclic the players who have cyclic preferences and show that, although
their presence is sufficient to prevent termination of the whole
dynamics, players with acyclic preferences can still be guaranteed a
bounded number of updates in their choices, provided that the
cyclic players play lazily. Thus, in this case, any infinite
sequence of updates will eventually be made up of updates from the
cyclic players only. This provides some robustness to our
termination result.

Let us consider a set of players $N$ partitioned into the sets $N_c$
and $N_r$ of cyclic and acyclic players respectively; a set $O$ of
outcomes; and preferences $(\prec_i)_{i\in N}$. Let us consider the
dynamics $\Crazydyn$ such that $s\Crazydyn s'$ iff:
\begin{inparaenum}[(i)]
\item either $s\SIdyn s'$ and $d(\Hs{s,s'})\cap N_c = \emptyset$; 
\item or $s\Ldyn s'$ and $d(\Hs{s,s'})\subseteq N_c$\footnote{Observe
    that when $s\Ldyn s'$, the set $d(\Hs{s,s'})$ is necessarily a
    singleton, because of the $1P$-Property.}.
\end{inparaenum}
It means that the acyclic players play according to the \SIDyn, while
the cyclic players have to play according to the \LDyn.  In this case,
we say that the dynamics terminates for acyclic player if there is no
infinite sequence of strategy profiles $(s^k)_{k\in \N}$ such that:
\begin{inparaenum}[(1)]
\item for all $k \in \N$: $s^k \Crazydyn s^{k+1}$
and
\item for all $j\in \N$ there is $k>j$ s.t.
  $d(\Hs{s^k, s^{k+1}}) \cap N_r \neq \emptyset$ (i.e. the acyclic
  players change infinitely often).
\end{inparaenum}
Then:

\begin{prop}\label{prop:cyclic players}
  Let $N=N_r\uplus N_c$, $O$, $(\prec_i)_{i\in N}$ be a set of players
  (with $N_c$ the set of cyclic players), a set of outcomes and a set
  of preferences. Then, the dynamics $\Crazydyn$ terminates for
  acyclic players in all games built over $N, O, (\prec_i)_{i\in N}$.
\end{prop}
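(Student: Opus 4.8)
The plan is to argue by contradiction: assume that some game built over $N,O,(\prec_i)_{i\in N}$ admits an infinite run $s^0 \Crazydyn s^1 \Crazydyn \cdots$ in which the acyclic players (those in $N_r$) move infinitely often, i.e. $d(\Hs{s^k,s^{k+1}})\cap N_r\neq\emptyset$ for infinitely many $k$. First I would normalise the run so that every acyclic (type (i)) step is \emph{atomic}. This is legitimate because Lemma~\ref{lemma : Simulation Adyn} lets me split any $\SIdyn$ move into a sequence of $\Adyn$ moves; since a type (i) step changes only nodes owned by $N_r$, each atomic piece still changes a single $N_r$-node and is therefore again a valid type (i) step of $\Crazydyn$. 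The resulting run is still infinite and still contains infinitely many acyclic moves. As a game has only finitely many strategy profiles, some profile recurs infinitely often, so I can extract a cycle $C$, i.e. indices $\tau<\tau'$ with $s^\tau=s^{\tau'}$, and — because acyclic moves are infinitely frequent — I may choose $C$ to contain at least one acyclic move. I then fix $g$, a \emph{deepest} node of an acyclic player $i$ whose action changes along $C$; by maximality, no acyclic node strictly below $g$ changes along $C$.

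The crux is a \textbf{freezing} observation that exploits the laziness of the cyclic players. Let $c$ range over the children of $g$, write $T_{gc}$ for the corresponding child subtree, and let $\outcome{s|_{gc}}$ be its induced value. Along $C$, the actions inside $T_{gc}$ can be modified only by moves located strictly below $g$; by the choice of $g$ these are never acyclic moves, and a cyclic (type (ii)) move is \emph{lazy}, hence rewrites only nodes lying on the current play. Consequently, whenever $g$ does not currently select $c$, no node of $T_{gc}$ lies on the play, so $T_{gc}$ is \emph{frozen}: its sub-profile, and therefore $\outcome{s|_{gc}}$, stays constant until $c$ is selected again. Cutting $C$ into the successive episodes delimited by $i$'s switches at $g$ (say children $c^{(0)},c^{(1)},\ldots$ with $c^{(q)}=c^{(0)}$), this yields two facts. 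Writing $E_r$ and $L_r$ for the value $\outcome{s|_{gc^{(r)}}}$ at the start and at the end of episode $r$, the atomicity of the switch at $g$ together with the Subgame Improvement Property gives the \emph{switch fact} $L_r\prec_i E_{r+1}$; and freezing gives the \emph{continuity fact} that $E_{r+1}=L_{r'}$, where $r'$ is the previous episode (read cyclically around $C$) that selected the same child $c^{(r+1)}$.

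From here the contradiction is purely combinatorial. Composing the two facts and setting $\phi(r):=r'$, I obtain $L_r\prec_i L_{\phi(r)}$ for every episode $r$. Since $\phi$ is a self-map of the finite set of episodes of $C$, iterating it from any starting point eventually loops: there are $r_0$ and $a\geq 1$ with $\phi^a(r_0)=r_0$. Chaining the inequalities along this loop yields $L_{r_0}\prec_i L_{\phi(r_0)}\prec_i\cdots\prec_i L_{\phi^a(r_0)}=L_{r_0}$, i.e. a cycle of $\prec_i$, contradicting the acyclicity of $\prec_i$. This refutes the assumption and shows that the acyclic players move only finitely often, which is exactly termination of $\Crazydyn$ for acyclic players.

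The hard part is precisely the interaction that defeats a naive potential argument: the cyclic players may push the subgame value $\outcome{s|_{gc}}$ up and down arbitrarily, so $i$'s successive choices at $g$ need not improve monotonically and $i$ might \emph{a priori} oscillate forever. What rescues the argument is the laziness of the cyclic players, which freezes every non-selected child subtree and thereby ties each entering value $E_{r+1}$ to the \emph{previously recorded} leaving value $L_{r'}$ of the same child; this bookkeeping is what converts the infinitely many local improvements of $i$ into a genuine $\prec_i$-cycle. I expect the delicate points of a full write-up to be the verification that the normalisation through Lemma~\ref{lemma : Simulation Adyn} preserves membership in $\Crazydyn$, and the careful reading of ``previous episode of the same child'' around the cut of $C$ — in particular for a child selected only once per period, where freezing forces $E_r=L_r$.
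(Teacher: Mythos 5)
Your proof is correct, but it takes a genuinely different route from the paper's. The paper's own argument adapts the semi-potential technique of Le Roux and Pauly \cite{LeRouxDynamics}: it introduces a single global function for the acyclic players that strictly decreases whenever a player of $N_r$ updates, and shows this function is left untouched by the lazy updates of the cyclic players; termination for acyclic players then follows from finiteness of the strategy space. You instead argue combinatorially: atomise the acyclic $SI$-moves via Lemma~\ref{lemma : Simulation Adyn} (each atomic piece changes a single $N_r$-node and so remains a legal type~(i) step of $\Crazydyn$ --- though note this requires the construction in the lemma's \emph{proof}, since the statement alone does not guarantee that intermediate steps only touch nodes of $\Hs{s,s'}$), extract a cycle containing an acyclic move, fix a deepest changing $N_r$-node $g$, and use laziness to freeze every non-selected child subtree of $g$: a node of $T_{gc}$ lies along the play of the new profile only if every prefix, in particular $g$, selects towards it, and cyclic moves never change $g$ since it belongs to $i\in N_r$; by the deepest choice, acyclic moves cannot touch $T_{gc}$ either (the subtree is in fact also frozen when $c$ is selected but the play misses $g$, which only helps). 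Chaining the $SI$ switch inequalities $L_r\prec_i E_{r+1}$ with the continuity facts through the finite self-map $\phi$ then manufactures a $\prec_i$-cycle, contradicting acyclicity of $\prec_i$ for $i\in N_r$; the cyclic wrap-around of $\phi$ is legitimate once the cycle is read as a periodic run, as you indicate, and your remark that a once-per-period child forces $E_r=L_r$ is the right way to handle the seam. Comparatively, the paper's potential yields a uniform decreasing measure, in the spirit of Kukushkin's semi-potentials, but leans on the machinery of \cite{LeRouxDynamics}; your argument is self-contained, reusing only Lemma~\ref{lemma : Simulation Adyn} and finiteness of profiles, and it isolates precisely where laziness of the cyclic players is needed --- which directly explains why the counterexample of \figurename~\ref{fig : Cyclic and IADyn} (left), where the cyclic player updates off the play under $\SIdyn$, defeats termination.
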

 
\begin{proof}[Sketch of proof]
  In \cite[Section~5]{LeRouxDynamics}, Le Roux and Pauly provide an
  alternative proof of the termination for the Lazy Improvement
  Dynamics (here \LDyn). They associate a function per player which
  decrease when the associated player update his strategy, and does
  not otherwise (i.e. when other players update). The termination of
  the \LDyn\ is then a consequence of the decrease of the functions,
  together with the finiteness of strategies (per player).  When
  cyclic players are added, the \LDyn\ still terminates for acyclic
  players as they do not affect the acyclic players.

  In order to obtain the desired result, we need adapt the proof
  of~\cite[Section~5]{LeRouxDynamics}, by introducing a global
  function (i.e. for all acyclic players). This new global function
  has the property to decrease when acyclic players update their
  strategies. Moreover, one also shows that the latter function is not
  affected by the updates of the cyclic players, as their strategies
  follows the \LDyn. This implies that acyclic players do not update
  their strategy an infinite number of time.
\end{proof}

Finally, we note that, if we allow cyclic players to play with the
\SIDyn, the result does not hold.  Consider the game in
\figurename~\ref{fig:twoplayer}, and consider Player~$2$ as the cyclic
one. Then, the graph associated to the \SIDyn\ is represented in
\figurename~\ref{fig : Cyclic and IADyn} (left), where dotted lines
represent updates of the cyclic player (Player~$2$). Clearly, this
graph contains a cycle in which Player~$1$ updates infinitely often.

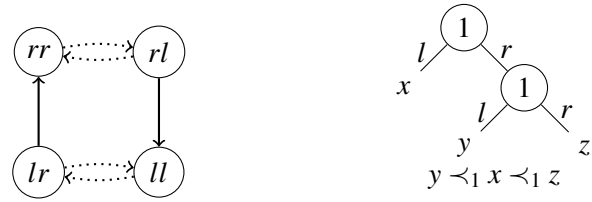
\begin{figure}
  \begin{center}
    \begin{tikzpicture}[scale=.8]
      
      \draw (2,2) node [circle, draw, inner sep=3pt] (s1) {$rr$};
      \draw (4,2) node [circle, draw, inner sep=3pt] (s2) {$rl$};
      \draw (2,0) node [circle, draw, inner sep=3pt] (s3) {$lr$};
      \draw (4,0) node [circle, draw, inner sep=3pt] (s4) {$ll$};
      
      \draw [->][thick] (s3) to (s1);
      \draw [->][thick] (s2) to (s4);
      \draw [->][thick, dotted] (s1) to [bend left=10] (s2);
      \draw [->][thick, dotted] (s2) to [bend left=10] (s1);
      \draw [->][thick, dotted] (s3) to [bend left=10] (s4);
      \draw [->][thick, dotted] (s4) to [bend left=10] (s3);	
    \end{tikzpicture}    
    \hspace*{25mm}
    \begin{tikzpicture}[scale=.8]
      
      \draw (2,3) node [circle, draw, inner sep=3pt] (a) {$1 $};
      \draw (1,2) node (aa) {$x $};
      \draw (3,2.) node  [circle, draw, inner sep=3pt] (ab) {$1 $};
      \draw (2,1) node  (abc) {$y$};
      \draw (4,1) node (aba) {$z $};    
      
      \draw (1.3,2.6)  node {$ l $};
      \draw (2.7,2.6)  node {$ r $};
      \draw (2.3,1.6)  node {$ l $};
      \draw (3.7,1.6)  node {$ r $};
      
      \draw  (a) to (aa);
      \draw  (a) to (ab);
      \draw  (ab) to (abc);
      \draw  (ab) to (aba);
      
      \draw (2.5,0.5) node {$y\prec_1 x\prec_1 z$};
    \end{tikzpicture}
    \caption{Left: graph of the \SIDyn\ for the game of
      \figurename~\ref{fig:twoplayer}, with a cyclic player. Right: an
      example showing that the Terminal profiles of \IADyn{} are not
      always NEs. } 
    \label{fig : Cyclic and IADyn} 
  \end{center}
\end{figure}


\section{Improvement dynamics and coalitions}\label{sec:cimp-dynamics}
While Section~\ref{sec:subg-impr-dynam} was devoted to characterising
the $X$-Dynamics with $SI\in X$, we turn now attention to those where
$I\in X$. Recall that Le Roux and Pauly have studied in
\cite{LeRouxDynamics} the Lazy Improvement Dynamics, which corresponds
to our $\{I, L, 1P\}$-Dynamics and shown that it terminates when the
preferences of the players are acyclic, and terminates to Nash
Equilibria. Their study of the $\{I, L, 1P\}$-Dynamics was motivated
by the fact that less restrictive dynamics (that are still contained
in $\dyn{I}$) do not always terminate, namely the $\{I\}$-Dynamics and
the $\{I, 1P\}$-Dynamics. These results appear as the grey lines in
\tablename~\ref{tab:results-I}.

Our contribution in the present section is to fill in
\tablename~\ref{tab:results-I} by the following results. First, for
all the $X$-Dynamics for $\{I,A\}\subseteq X$, we show that acyclic
preferences guarantee termination, and that the final profiles contain
the Nash Equilibria. 

Second, we consider the $\{I,L\}$-Dynamics which can be regarded as a
\emph{coalition dynamics}, where several players can change their
strategies at the same time to obtain a better outcome for \emph{all
  players} taking part to the coalition.  For example,
in~\figurename~\ref{fig:twoplayer}, the two players can make a
coalition to change from $ll$ to $rr$, as they prefer $z$ to $x$. We
characterise families of games and conditions on the preferences where
termination of the $\{I,L\}$-Dynamics is guaranteed (with the terminal
profiles being exactly the \emph{Strong Nash Equilibria} in the sense
of Aumann \cite{Aumann59}). 

\begin{table}[t]
  \centering
  \caption{Characterisation of the $X$-Dynamics with $I\in X$.
    A \checkmark in the `Games' column means that the hypothesis is
    used to prove the result, and $\times$ that it is not. Hence, the results
    for the classes that are not mentioned
    can be directly deduced, for example, the first line implies
    termination for all $X$-Dynamics with $\{I,A\}\subseteq X$, etc. The
    `Termination' column gives necessary (n.) and/or sufficient (s.)
    conditions to ensure that the dynamics terminates on \emph{all}
    games in the class.} 
  \label{tab:results-I}
  \medskip
  \begin{tabular}{ccccccc}
    \toprule
    $A$   &$L$   &$1P$  &Games &Termination  &Final Profiles  &Reference\\
    \midrule

    \checkmark &$\times$ &$\times$                 & acyclic prefs & \checkmark &$\supseteq$ NEs &
Corollary~\ref{coro : IA terminates}    \\
    \rowcolor[gray]{.9}
    $\times$& $\times$ &\checkmark   &          &$\times$  & not appl.                &\cite{LeRouxDynamics}
    \\
    \multirow{4}{*}{$\times$} & 
    \multirow{4}{*}{\checkmark}&
    \multirow{4}{*}{$\times$}   &swo prefs & prefs can be layered (s.)
                                      &\multirow{4}{*}{$=$ SNEs}
                                                                    &Theorem~\ref{thm : SWO implications}
    \\
    &&&swo prefs &prefs out of pattern (n.)&&Theorem~\ref{thm : SWO implications} \\
    &&&slo prefs &prefs out of pattern (n. \& s.)&&Corollary~\ref{cor:termination-cdyn}\\
    &&&swo prefs, 2 player &prefs out of pattern (n. \& s.)&&Corollary~\ref{cor:termination-cdyn}\\
    \rowcolor[gray]{.9}
    $\times$& \checkmark &\checkmark &acyclic prefs& \checkmark &$=$ NEs       & \cite{LeRouxDynamics}
    \\
    \bottomrule
  \end{tabular}
\end{table}

\subsection{The \IADyn} 
To complete \tablename~\ref{tab:results-I}, we consider now the first
line which represents the $X$-Dynamics with $\{I,A\}\subseteq X$. All
these Dynamics can be considered at once thanks to the next
proposition:
\begin{prop}\label{prop:equiv IADyn and al.}
  All the $X$-Dynamics with
  $\{I, A\}\subseteq X\subseteq \{SI, I, A, L, 1P\}$ are equal.
\end{prop}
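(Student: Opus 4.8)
The plan is to leverage the inclusion structure between the $X$-Dynamics: adding a property to $X$ can only make the dynamics smaller, so for every admissible $X$ (i.e.\ $\{I,A\}\subseteq X\subseteq\{SI,I,A,L,1P\}$) we have
\[
  {\dyn{SI,I,A,L,1P}} \subseteq {\dyn{X}} \subseteq {\dyn{I,A}}.
\]
It therefore suffices to prove the single reverse inclusion ${\dyn{I,A}}\subseteq{\dyn{SI,I,A,L,1P}}$; the squeeze then forces all the intermediate dynamics to coincide. Concretely, I must show that any pair $(s,s')$ with $(s,s')\models I$ and $(s,s')\models A$ automatically satisfies $1P$, $L$ and $SI$ as well (it satisfies $I$ and $A$ by hypothesis).

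The property $1P$ is immediate from Lemma~\ref{lemma : relation between properties}, since $A$ implies $1P$. For the two remaining properties I exploit that, by Atomicity, $\Hs{s,s'}\subseteq\{h^*\}$ for some single node $h^*$. If $s=s'$, then $L$ and $SI$ hold vacuously; otherwise $s$ and $s'$ differ exactly at $h^*$, owned by the unique player $i=d(h^*)$ who changed his strategy, so $I$ gives $\outcome{s}\prec_i\outcome{s'}$.

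To establish $L$, I argue that $h^*$ must lie along the play induced by $s'$. The key point is that $s$ and $s'$ agree on every node other than $h^*$: so if $h^*$ were off the $s'$-play, then descending from the root the two profiles would prescribe the same action at every node visited along the $s'$-play, whence the $s$-play would coincide with the $s'$-play and $\outcome{s}=\outcome{s'}$. This contradicts $\outcome{s}\prec_i\outcome{s'}$ (assuming, as is standard for a strict preference, that each $\prec_i$ is irreflexive). Hence $h^*$ lies on the $s'$-play, which is exactly~$L$.

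It remains to check $SI$, which reduces to the single inequality $\outcome{s|_{h^*}}\prec_i\outcome{s'|_{h^*}}$, since $\Hsj{j}{s,s'}=\emptyset$ for every $j\neq i$ while $\Hsi{s,s'}=\{h^*\}$. As $h^*$ now lies on the $s'$-play, the path from the root to $h^*$ passes only through proper ancestors of $h^*$, on which $s$ and $s'$ agree; thus the $s$-play also reaches $h^*$. Both plays therefore factor through $h^*$, giving $\outcome{s}=\outcome{s|_{h^*}}$ and $\outcome{s'}=\outcome{s'|_{h^*}}$, so the global inequality from $I$ transfers verbatim to the subgame and yields $SI$. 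I expect the play-coincidence argument behind $L$ (and its factorisation analogue for $SI$) to be the only delicate step; the rest is bookkeeping over Definition~\ref{def:properties} and the monotonicity of $\dyn{\cdot}$ under the addition of properties.
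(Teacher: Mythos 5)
Your proof is correct and follows essentially the same route as the paper: it likewise argues that any update satisfying $I$ and $A$ automatically satisfies $1P$ (via Lemma~\ref{lemma : relation between properties}), $L$ (an atomic change off the induced play cannot alter the outcome) and $SI$ (a single improving change lying on the play transfers the improvement to the subgame rooted at the change), so all the intermediate dynamics collapse onto $\dyn{\{I,A\}}$. Your only addition is to make explicit the irreflexivity of $\prec_i$, which the paper's sketch uses implicitly when it infers that an improvement forces the outcome to change.
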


\begin{proof}
  To understand these equalities, we must focus on \IADyn. This
  dynamics allows only one update between two profiles, and the
  outcome must be better for the player that has changed his
  strategy. If we want that the outcome of the game changes, the
  atomic move must have occurred along the play induced by the
  strategy. Thus, \IADyn{} verifies the Lazy Property ($L$). Moreover,
  by Lemma~\ref{lemma : relation between properties}, it also verifies
  the One Player Property ($1P$). Finally, as the outcome is improved,
  and only one change has been done, in particular the payoff is
  improved in the subgame rooted at the change. Thus, the \IADyn{} also
  verifies the Subgame Improvement Property ($SI$).
\end{proof}

By Theorem~\ref{thm : termination and SPE terminal profile}, as the
\IADyn{} verifies the Subgame Improvement Property, it terminates for
every game over some $N, O, (\prec_i)_{i\in N}$ if and only if the
preferences are acyclic:
\begin{cor}\label{coro : IA terminates}
  Let $N, O, (\prec_i)_{i\in N}$ be respectively a set of players, a
  set of outcomes and preferences. Then, the two following statements
  are equivalent:
  \begin{inparaenum}[(1)]
  \item in all games $G$ built over $N, O, (\prec_i)_{i\in N}$,
    the $\{I, A\}$-Dynamics terminates;
  \item the preferences  $(\prec_i)_{i\in N}$ are acyclic.
  \end{inparaenum}
\end{cor}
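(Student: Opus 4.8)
The plan is to treat this as an immediate consequence of the two results that directly precede it, so that essentially no new argument is required. The crucial observation, already established in Proposition~\ref{prop:equiv IADyn and al.}, is that the \IADyn{} satisfies the Subgame Improvement Property; equivalently, $\dyn{\{I, A\}} = \dyn{\{SI, I, A\}}$, a dynamics whose defining set of properties contains $SI$. This identity is exactly the bridge we need, because the termination characterisation we want to invoke is phrased in terms of dynamics whose index set includes $SI$.

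Given this, I would simply instantiate Theorem~\ref{thm : termination and SPE terminal profile} with the set $X = \{SI, I, A\}$. That theorem states that for any $X$ with $SI \in X$, the $X$-Dynamics terminates in all games built over $N, O, (\prec_i)_{i\in N}$ if and only if the preferences $(\prec_i)_{i\in N}$ are acyclic. Rewriting $\dyn{\{SI, I, A\}}$ as $\dyn{\{I, A\}}$ via Proposition~\ref{prop:equiv IADyn and al.} then yields precisely the claimed equivalence between statements (1) and (2) of the corollary, with no further work.

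Since every step is an appeal to a previously proved result, there is no genuine obstacle here: all the conceptual content — in particular the verification that an atomic improving move, because it changes a single choice and improves the global outcome, necessarily improves the payoff in the subgame rooted at the changed node and hence satisfies $SI$ — was already discharged in the proof of Proposition~\ref{prop:equiv IADyn and al.}. The only point to handle with care is the bookkeeping of property sets: the corollary is stated for the literal set $X = \{I, A\}$, which does not syntactically contain $SI$, so it is precisely the equality of dynamics supplied by Proposition~\ref{prop:equiv IADyn and al.} that licenses the application of Theorem~\ref{thm : termination and SPE terminal profile} under its hypothesis $SI \in X$.
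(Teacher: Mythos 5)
Your proposal is correct and follows exactly the paper's own route: the paper likewise derives the corollary by noting (via Proposition~\ref{prop:equiv IADyn and al.}) that the \IADyn{} satisfies the Subgame Improvement Property and then applying Theorem~\ref{thm : termination and SPE terminal profile}. Your explicit remark that the identity $\dyn{\{I,A\}} = \dyn{\{SI,I,A\}}$ is what licenses the theorem's hypothesis $SI \in X$ is precisely the (implicit) bookkeeping step in the paper's argument.
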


Now that we have established termination, let us turn our attention to
the terminal profiles. It turns out that they contain all Nash
Equilibria of the game:
\begin{prop}\label{prop : IA contains NE}
  Let $G$ be a sequential game. Then all NEs of $G$ are terminal
  profile of $\IAdyn$.
\end{prop}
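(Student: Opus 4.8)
The plan is to prove the contrapositive's natural form directly: I want to show that if $s$ is an NE, then $s$ is terminal for $\IAdyn$, i.e. there is no $s'$ with $s \IAdyn s'$. So I would suppose, towards a contradiction, that $s$ is an NE and that some update $s \IAdyn s'$ exists. By definition of the $\{I,A\}$-Dynamics, this update satisfies both the Atomicity Property ($A$) and the Improvement Property ($I$). By Atomicity, there is a single node $h^*$ such that $s(h) = s'(h)$ for all $h \neq h^*$, and by Lemma~\ref{lemma : relation between properties} the update also satisfies $1P$, so exactly one player, say $j = d(h^*)$, changes his strategy, and $s'_i = s_i$ for all $i \neq j$. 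Writing $s_j^* = s'_j$, we thus have $s' = (s_{-j}, s_j^*)$.

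The key observation is then that the Improvement Property applied to player $j$ (the unique player who changes) gives $\outcome{s} \prec_j \outcome{s'}$, that is $\outcome{(s_{-j}, s_j)} \prec_j \outcome{(s_{-j}, s_j^*)}$. But this is exactly a profitable unilateral deviation for player $j$ from the profile $s$, which directly contradicts the defining condition of a Nash Equilibrium, namely that for all players $i$ and all strategies $s_i$ of player $i$, $\outcome{(s_{-i}^*, s_i^*)} \nprec_i \outcome{(s_{-i}^*, s_i)}$. Instantiating this with $i = j$ and the deviating strategy $s_j^*$ yields $\outcome{s} \nprec_j \outcome{s'}$, contradicting the improvement we derived. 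Hence no such $s'$ can exist, and $s$ is terminal.

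I do not expect any genuine obstacle here, since the argument is essentially an unpacking of definitions: the only subtlety is to make sure that the $\{I,A\}$-update really does correspond to a single-player unilateral deviation, which is precisely what the Atomicity Property (together with its consequence $1P$ from Lemma~\ref{lemma : relation between properties}) guarantees. The main thing to be careful about is the direction of the NE condition and matching the improvement $\outcome{s} \prec_j \outcome{s'}$ against the NE requirement $\outcome{s} \nprec_j \outcome{s'}$; these are contradictory by definition of $\nprec_j$. It is worth noting that the converse inclusion (that all terminal profiles are NEs) need \emph{not} hold for this dynamics — indeed the paper's \figurename~\ref{fig : Cyclic and IADyn} (right) is presented precisely as an example where a terminal profile of $\IAdyn$ fails to be an NE — which is why the statement claims only containment ($\supseteq$ NEs) rather than equality.
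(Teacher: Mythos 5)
Your proof is correct and takes essentially the same approach as the paper, whose entire proof is the one-liner ``if $s$ is an NE, no player can improve the outcome alone from $s$, hence $s$ is terminal in $\IAdyn$''; your write-up merely makes explicit the step the paper leaves implicit, namely that an $\{I,A\}$-update is, via Atomicity and Lemma~\ref{lemma : relation between properties}, a profitable unilateral deviation by the single player $d(h^*)$, contradicting the NE condition. Nothing further is needed.
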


\begin{proof}
  If $s$ is an NE, no player can improve the outcome alone from $s$,
  hence $s$ is terminal in $\IAdyn$.
\end{proof}
Let us notice that some non-NE profiles can also be terminal profiles
of the~\IADyn. For example, in \figurename~\ref{fig : Cyclic and
  IADyn} (right), $ll$ is a terminal profile, but not an NE.

\subsection{The $\{I,L\}$-Dynamics}
Let us now turn our attention to the $\{I,L\}$-Dynamics. First, we
observe that the terminal profiles of this dynamics are exactly the
Strong Nash Equilibria (see Section~\ref{sec:preliminaries} for the
definition).  We believe the observation is interesting, since, as
already stated, the concept of SNE has sometimes been deemed `too
strong' in the literature. 

As can be seen from \tablename~\ref{tab:results-I}, the
conditions to ensure termination are more involved and require a finer
characterisation of the preferences. We start by discussing these
conditions.

\paragraph{Orders} Let us begin with the definition of strict linear
and weak order. 
A \textbf{strict linear order} over a set $O$ is a total, irreflexive
and transitive binary relation over $O$. This is a natural way to see
orders: for example, the usual orders over $\N$ or $\R$ are strict
linear orders.
A \textbf{strict weak order} over a set $O$ is an irreflexive and
transitive binary relation over $O$ that provides the transitivity of
incomparability. Formally, for $x\neq y\in O$, if $\neg (x<y)$ and
$\neg (y<x)$, we say that $x$ and $y$ are incomparable, and we write
$x\sim y$ (this can happen because a strict weak order is not a total
relation). Then, in a strict weak order, for all $x$, $y$, $z$, we
have that $x\eq y$ and $y\eq z$ implies $x\eq z$.

We write $x\wo y$ if $x<y$ or $x\sim y$, i.e. if $\neg(x\lo
y)$. Sometimes we will denote strict weak orders by $\wo$ to emphasise
the possibility of incomparability, but this does not means that the
relation is reflexive. We argue that strict weak orders are quite
natural to consider in our context. Indeed, the incomparability of two
outcomes for a player reflects the indifference of the player
regarding these outcomes. We can easily imagine two outcomes $x$ and
$y$ such that Player~1 prefers $x$ to $y$ but Player~$2$ has no
preference. This kind of vision justifies the transitivity of
incomparability. Indeed, if a player can neither choose between $x$
and~$y$, nor between $y$ and $z$, it would not seem natural that for
example, he prefers $x$ to $z$. Finally, we say that $\lo'$ is a
\textbf{strict linear extension} of a strict weak order $\lo$ if it is
a strict linear order and $\forall x,y\in O$: $x\lo y$ implies $x<'y$.


\paragraph{Layerability} While the notions of order we have defined
above make sense in our context, they are unfortunately not sufficient
to ensure termination of \CDyn{}. Indeed, considering the game in
\figurename~\ref{fig:twoplayer} and his associated graph with \CDyn{}
in \figurename~\ref{fig:graphs} (right), we can see that, even when
players have strict linear preferences, the dynamics does not
terminate. We thus need to introduce more restrictions on outcomes and
preferences to ensure termination. In \cite{LeRouxPattern}, Le Roux
considers a pattern over outcomes and preferences, and proves that the
absence of this pattern induces some structure on the outcomes that we
call \emph{layerability} (in the case of strict linear
order). Layerability, in turn, can be used to prove termination of
dynamics as we are about to see. Our first task is thus to generalise
the pattern and the definition of layerability of outcomes in this
case in the case of \emph{strict weak orders}.

Let $O$ be a finite set of outcomes, $N$ be a finite set of players
and $(\wo_i)_{i\in N}$ be strict weak orders over~$O$.  For
$(\wo_i)_{i\in N}$ a strict weak order, we say that $(\wo_i)_{i\in N}$
is:
\begin{inparaenum}[(i)]
\item \textbf{out of main pattern} for $O$ if it satisfies:
\begin{align}
   \forall x,y,z\in O,\forall i, j \in N &: \neg (x\lo_i y\lo_i z \text{
  and } y\lo_j z \lo_j x); \label{main pattern}
\end{align}
\item \textbf{out of secondary pattern} for $O$ if it
  satisfies:
\begin{align}
  \forall w,x,y,z\in O,\forall i, j \in N &:
  \neg (w\lo_i x\lo_i y\lo_i z \text{ and } x\eq_j z \lo_j w\eq_j y); \label{sec pattern}
\end{align}
and
\item \textbf{out of pattern} for $O$ if it is out of main and
  secondary pattern.
\end{inparaenum}
Notice that, when $(\lo_i)_{i\in N}$ is a strict linear order, then
$(\lo_i)_{i\in N}$ is out of pattern for $O$ if and only if
$(\lo_i)_{i\in N}$ is out of main pattern for~$O$.

Then, we say that $(\wo_i)_{i\in N}$ \textbf{can be layered} for $O$
if there is a partition $\{O_{\lambda}\}_{\lambda\in I}$ of $O$ (whose
elements are called \emph{layers}) and a strict total order $<$ on $I$
(i.e., the layers are totally ordered) s.t.:
\begin{enumerate}
\item\label{item:8} $\lambda < \mu$ implies that
  $\forall i \in N, \forall x \in O_{\lambda}, y\in O_{\mu}, x\wo_i
  y$; and
\item\label{item:9} $\forall \lambda \in I$, $\forall i,j\in N$,
  $\forall w,x,y,z\in O_{\lambda}$:
  $\neg (x\lo_i y\wedge x\lo_j y\wedge w\lo_i z \wedge z\lo_j w)$.
\end{enumerate}
The intuition between the notion of layer is as follows:
Point~\ref{item:8} tells us that the ordering of the layers is
compatible with the preference relation of \emph{all} players. That
is, if we pick $x$ in some layer $O_\lambda$ and $y$ in some layer
$O_\mu$, with $\lambda<\mu$ (i.e. $O_\mu$ is `better' than
$O_\lambda$), then \emph{all} players will prefer $y$ to $x$. However,
with this point alone in the definition, one could put all the
outcomes in the same layer (i.e., the partition would be trivially
$\{O\}$). Point~\ref{item:9} ensures that the disagreement of the
players on some outcomes is also reflected in the layers. That is, in
all layers $O_\lambda$, we cannot find two players $i$ and $j$ that
agree on a pair of elements $x$ and $y$ from $O_\lambda$ (because they
both prefer $y$ to $x$) but disagree on pair of elements $w$ and $z$
from $O_\lambda$ (because player $i$ prefers $z$ to $w$ but player $j$
prefers $w$ to $z$).  

\begin{table}
  \caption{Four preference orders $\lo_1$, $\lo_2$, $\lo_3$ and  $\lo_4$
    on the set of outcomes $\{u,v,w,x,y,z\}$ that can be layered. }
  \label{table : can be layered}
  
  \centering
  \begin{tabular}{lcccc}
    \toprule
    & $\lo_1$ & $\lo_2$& $\lo_3$&$\lo_4$\\
    \midrule
    \rowcolor[gray]{.9}
    & $u$ & $u$ & $w$ & $u$\\
    \rowcolor[gray]{.9}
    Layer $\nu$ & $v$ & $v$ & $v$ & $v$\\
    \rowcolor[gray]{.9}
    & $w$ & $w$ & $u$ & $w$\\
    \midrule
    Layer $\mu $ & $x$ & $x$ & $x$ & $x$\\
    \midrule
    \rowcolor[gray]{.9}
    & $y$ & $z$ & $z$ & $y$\\
    \rowcolor[gray]{.9}
    \multirow{-2}{*}{Layer $\lambda$} & $z$ & $y$ & $y$ & $z$\\
    \bottomrule
  \end{tabular}
\end{table}

For example, consider the set of outcomes $\{u,v,w,x,y,z\}$ and four
preference orders $\lo_1$, $\lo_2$, $\lo_3$ and $\lo_4$ depicted in
\tablename~\ref{table : can be layered}, where each order corresponds
to a column, and the order of the rows indicate decreasing preference
order (for example, $v\lo_3 w$ since, in column $\lo_3$, $v$ occurs in
the second row, while $w$ occurs in the first row).  With these
orders, the outcomes can be distributed in three layers. In other
words the partition is $\{\{y,z\}, \{x\}, \{u,v,w\}\}$, and the order
on the layers is $\{y,z\} < \{x\} < \{u,v,w\}$. On this example, the
intuition given above can be verified: while the four players do not
agree on the $u$, $v$ and $w$ outcomes, they agree that these outcomes
are all better than $x$, which is always better than $y$ and $z$.

Given this intuition, it is not surprising that the notion of layer
has a strong link with the presence of the pattern that prevents
termination. Indeed, the following result, from \cite{LeRouxPattern}
shows that these notions are \emph{equivalent} in the case of
\emph{strict linear orders}:
\begin{prop}[\cite{LeRouxPattern}]\label{prop : result from Le Roux
    paper}
  Let $O$ be a finite set of outcomes, $N$ a finite set of players and
  $(\lo_i)_{i\in N}$ strict linear orders.  Then $(\lo_i)_{i\in N}$ is
  out of pattern for $O$ if and only if $(\lo_i)_{i\in N}$ can be
  layered for $O$.
\end{prop}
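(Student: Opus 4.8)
Prove Proposition (from \cite{LeRouxPattern}): For strict linear orders $(\lo_i)_{i\in N}$ over a finite set $O$, they are out of pattern iff they can be layered.

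Let me think carefully about this.The plan is to exploit the fact, already noted in the excerpt, that for strict linear orders being out of pattern coincides with being out of main pattern \eqref{main pattern}, so only that single pattern must be handled. For the easy direction (layerable $\Rightarrow$ out of pattern) I would argue contrapositively: assume a main pattern $x\lo_i y\lo_i z$ and $y\lo_j z\lo_j x$. Then players $i$ and $j$ disagree on $\{x,y\}$ and on $\{x,z\}$, so Point~\ref{item:8} forces $x,y,z$ into a single layer $O_\lambda$; but then the agreed pair $y\lo z$ together with the reversed pair $x\lo_i y$, $y\lo_j x$ violates Point~\ref{item:9}. Hence any layering forbids the pattern.

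For the hard direction I would construct the layering explicitly. Define the \emph{disagreement graph} $D$ on $O$ by putting an edge $\{x,y\}$ whenever two players order $x,y$ oppositely, and take the layers to be the connected components of $D$. Point~\ref{item:8} does not even need the pattern hypothesis: two elements in distinct components are ordered unanimously, and a short argument using only that each $\lo_i$ is a linear order (hence acyclic) shows that any element outside a component $K$ is unanimously below \emph{every} element of $K$ or unanimously above every element of $K$ — one propagates this `side' along a spanning tree of $K$, the base case being that an element cannot sit unanimously strictly between the endpoints of a $D$-edge without creating a cycle in some $\lo_i$. This yields a strict total order on the components realising Point~\ref{item:8}.

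The real work is Point~\ref{item:9}, and this is where the pattern hypothesis enters. The key combinatorial observation is a triple-wise reading of \eqref{main pattern}: for three fixed outcomes, two players disagree on exactly two of the three pairs (Kendall distance $2$) if and only if their two orders are cyclic rotations of one another, which is precisely the main pattern. Thus being out of main pattern means that on every triple no two players are at distance $2$; a hexagon/independent-set count then shows that on every triple the players use \emph{at most two} distinct orders, and these two either differ by a single adjacent transposition or are mutual reverses. Equivalently, for each pair $(i,j)$ the restriction of $D$ to the pairs on which $i,j$ disagree is a disjoint union of cliques (it has no induced path on three vertices).

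It remains to upgrade this to: within a single component $C$, any two players $i,j$ either agree on all of $C$ or reverse all of $C$, which is exactly Point~\ref{item:9}. Suppose not: then $i,j$ disagree on some pair inside a $D_{ij}$-clique $K\subseteq C$ of size $\geq 2$, while $C$ being $D$-connected provides a $D$-edge $\{u,v\}$ leaving $K$, with $u\in K$ and $v\in C\setminus K$. Picking $w\in K\setminus\{u\}$, on the triple $\{u,w,v\}$ the players $i,j$ disagree only on $\{u,w\}$, hence sit at distance $1$, so by the previous paragraph \emph{every} player's order on $\{u,w,v\}$ equals $\lo_i$ or $\lo_j$ there; in particular every player agrees with the $i,j$-consensus on $\{u,v\}$. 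But $\{u,v\}\in D$ means some player orders $u,v$ oppositely to that consensus — a contradiction. I expect this final propagation step, turning the purely triple-local pattern constraint into global all-or-nothing behaviour on a component by feeding the connecting $D$-edge through the `at most two orders per triple' fact, to be the main obstacle; the remaining verifications are bookkeeping.
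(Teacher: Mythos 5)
Your argument is correct, but note that there is no internal proof to compare it with: the paper states Proposition~\ref{prop : result from Le Roux paper} as an imported result from \cite{LeRouxPattern} and never proves it, so your proof stands as an independent, self-contained derivation. Checking it on its merits: the easy direction is exactly right (a main pattern makes $i,j$ disagree on $\{x,y\}$ and $\{x,z\}$, so Point~\ref{item:8} traps $x,y,z$ in one layer, where the agreed pair $\{y,z\}$ contradicts Point~\ref{item:9}); and the reduction to the main pattern~\eqref{main pattern} is legitimate since strict linear orders admit no incomparabilities, so the secondary pattern~\eqref{sec pattern} is vacuous, as the paper itself remarks. For the hard direction, taking the layers to be the connected components of the disagreement graph $D$ is sound, and your two key lemmas check out: an outcome outside a component cannot lie unanimously strictly between the endpoints of a $D$-edge (this uses only transitivity of each $\lo_j$, no pattern hypothesis), which propagates along connectivity to give Point~\ref{item:8}; and on a triple, Kendall distance $2$ between two players' restrictions is precisely an instance of~\eqref{main pattern} (the two distance-$2$ orders are the cyclic rotations, one for each assignment of the player roles), so out of main pattern makes each pairwise disagreement graph $D_{ij}$ free of triples spanning exactly two edges, i.e.\ a disjoint union of cliques, and limits every triple to at most two orders (pairwise distances in $\{0,1,3\}$; the hexagon has no triangles and unique antipodes). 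Your final boundary step, where the hypothesis genuinely bites, is also valid: if a $D_{ij}$-clique $K$ were a nonempty proper subset of its $D$-component $C$, a crossing $D$-edge $\{u,v\}$ together with $w\in K\setminus\{u\}$ yields a triple on which $i,j$ are at distance $1$, forcing every player's restriction to be $\lo_i$ or $\lo_j$ there, hence unanimity on $\{u,v\}$, contradicting $\{u,v\}\in D$; so within each component any two players agree everywhere or reverse everywhere, which is Point~\ref{item:9}. In a polished write-up you should spell out the two bookkeeping items you flagged: that all elements of another component lie on the same side of $K$ (apply your element-versus-component claim to some $x\in K$ against that other component), and that the induced tournament on components is transitive (a cycle of components would give a cycle in every $\lo_i$) --- both are the one-liners you predicted, so no genuine gap remains.
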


As explained above, we seek to extend this result to strict weak
orders, as this will be ancillary to establishing
termination. Unfortunately, the result does not hold immediately in
this case, as shown by the following example. Consider the followings
preferences : $N = \{1,2,3\}$, $O=\{x,y,z\}$ and
$x\lo_1 y\lo_1 z, y\lo_2z\eq_2 x$ and $y\eq_3 z\lo_3 x$. These
preferences are out of pattern (they satisfy both~\eqref{main pattern}
and \eqref{sec pattern}), but they cannot be layered. However, if we
restrict ourselves to two players, the absence of our main and
secondary pattern is sufficient to ensure that the preferences can be
layered, in the case of strict weak orders, as shown in the next
Proposition. For the sake of clarity, let us denote
$(\wo_i)_{i\in \{1, 2\}}$ (resp. $(\lo_i)_{i\in \{1, 2\}}$) by
$\wo_{1,2}$ (resp. $\lo_{1,2}$). Then:

\begin{prop}\label{prop : 2players Layer extension}
  Let $O$ be a finite set of outcomes, $N=\{1, 2\}$, and $\wo_1, \wo_2$
  strict weak orders. The following statements are equivalent:
\begin{inparaenum}[(1)]
\item $\wo_{1,2}$ is out of pattern for $O$;
\item there exists $\lo'_{1,2}$, a strict linear extension of
  $\wo_{1,2}$ that can be layered for $O$;
\item $\wo_{1,2}$ can be layered for $O$.
\end{inparaenum}
\end{prop}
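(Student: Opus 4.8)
The plan is to establish the cycle of implications $(2)\Rightarrow(3)\Rightarrow(1)\Rightarrow(2)$; two of these are short and the real work is concentrated in $(1)\Rightarrow(2)$, where the hypothesis of being \emph{out of secondary pattern} is finally used. For $(2)\Rightarrow(3)$ I would simply reuse the partition. Suppose $\lo'_{1,2}$ is a strict linear extension of $\wo_{1,2}$ that can be layered, with layers $\{O_\lambda\}_{\lambda\in I}$ and order $<$ on $I$; I claim the \emph{same} partition layers $\wo_{1,2}$. Since $\lo_i\subseteq\lo'_i$, every comparison holding for $\wo$ also holds for $\lo'$, so Point~\ref{item:9} for $\lo'$ immediately yields Point~\ref{item:9} for $\wo$. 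For Point~\ref{item:8}, if $\lambda<\mu$, $x\in O_\lambda$ and $y\in O_\mu$, then $x\lo'_i y$ by Point~\ref{item:8} for the linear order $\lo'_i$; were $y\lo_i x$ we would get $y\lo'_i x$, contradicting irreflexivity of $\lo'_i$, so $x\wo_i y$, as required.

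For $(3)\Rightarrow(1)$ I would argue directly that a layered family is out of both patterns. The key observation is that Point~\ref{item:8} forces layer-consistency of strict preferences: if $x\lo_i y$, then the layer of $x$ is not strictly above that of $y$ (otherwise Point~\ref{item:8} would give $y\wo_i x$, i.e. $\neg(x\lo_i y)$). Feeding the three strict comparisons of the main pattern, resp. the strict comparisons $w\lo_i x\lo_i y\lo_i z$ and $z\lo_j w$ of the secondary pattern, into this observation produces a cycle of layer inequalities, so all the outcomes involved lie in a \emph{single} layer $O_\lambda$. Inside $O_\lambda$ one then exhibits an agreement pair and a disagreement pair contradicting Point~\ref{item:9}: for the main pattern $(x\lo_i y\lo_i z,\ y\lo_j z\lo_j x)$ take the agreement $y\lo_i z,\ y\lo_j z$ together with the disagreement $x\lo_i z,\ z\lo_j x$; for the secondary pattern take the agreement $x\lo_i y,\ x\lo_j y$ together with the disagreement $w\lo_i z,\ z\lo_j w$.

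The crux is $(1)\Rightarrow(2)$, which I would prove by \emph{constructing} a strict linear extension and then delegating layerability to Proposition~\ref{prop : result from Le Roux paper}. Fix an auxiliary strict linear order $\lhd$ on $O$, and define $\lo'_1$ as the lexicographic refinement that breaks ties using $\wo_1$ first, then $\wo_2$, then $\lhd$, and symmetrically $\lo'_2$ using $\wo_2$, then $\wo_1$, then the \emph{same} $\lhd$. Each $\lo'_i$ is a strict linear order extending $\wo_i$, being the strict part of the lexicographic product of the weak orders $\wo_1,\wo_2$ and the linear order $\lhd$. Using a common final tiebreak $\lhd$ guarantees that any two outcomes on which \emph{both} players are indifferent are ordered identically by $\lo'_1$ and $\lo'_2$, which is exactly what prevents them from forming a spurious disagreement. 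It then remains to prove that $\lo'_{1,2}$ is out of (main) pattern; since $\lo'_{1,2}$ is linear, Proposition~\ref{prop : result from Le Roux paper} upgrades this to ``$\lo'_{1,2}$ can be layered'', which is statement~(2).

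Proving that $\lo'_{1,2}$ is out of main pattern is the main obstacle. Assuming a main pattern $a\lo'_i b\lo'_i c$, $b\lo'_j c\lo'_j a$ in $\lo'$ (necessarily with $i\neq j$, since a linear order cannot contain both $a\lo'_i c$ and $c\lo'_i a$), I would unfold each $\lo'$-comparison into its defining case: genuinely $\wo$-strict, broken by the other player's strict preference, or broken by $\lhd$. The cases in which enough comparisons are $\wo$-strict reproduce a main pattern of $\wo_{1,2}$, contradicting~\eqref{main pattern}; the remaining cases, where ties of one player are resolved by the strict preference of the other, are precisely those captured by the four-outcome secondary pattern and contradict~\eqref{sec pattern}. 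I expect this case analysis --- bookkeeping which comparisons are genuine and which stem from tie-breaking, and showing the latter collapse onto~\eqref{sec pattern} --- to be the delicate and lengthy part, and it is exactly here that the restriction to $N=\{1,2\}$ is essential, the three-player example in the text showing that no such construction can succeed in general.
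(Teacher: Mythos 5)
Your implications $(2)\Rightarrow(3)$ and $(3)\Rightarrow(1)$ are correct as sketched (the paper's body states this proposition without including its proof, so I judge the attempt on the mathematics alone). The gap is in $(1)\Rightarrow(2)$, and it is fatal: the lexicographic extension you construct does not, in general, avoid the main pattern, so the case analysis you deferred cannot be completed. Take $O=\{a,b,c\}$ with $a\lo_1 b\lo_1 c$ and $b\eq_2 c\lo_2 a$. This pair is out of pattern: Player~$2$ has no strict $3$-chain, so \eqref{main pattern} cannot be instantiated, and with three outcomes no $4$-chain exists for \eqref{sec pattern}. Yet your rule breaks the tie $b\eq_2 c$ in the direction of $\wo_1$, producing $\lo'_1\colon a\lo'_1 b\lo'_1 c$ and $\lo'_2\colon b\lo'_2 c\lo'_2 a$ --- precisely the main pattern --- so by Proposition~\ref{prop : result from Le Roux paper} your extension cannot be layered, whereas the opposed extension $c\lo''_2 b\lo''_2 a$ does witness $(2)$ (the statement itself survives; only your construction fails). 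Your justification for the common tiebreak $\lhd$ is exactly backwards: for two players, what kills layerability (Point~\ref{item:9}) is an \emph{agreement} pair coexisting with a disagreement pair; spurious disagreements are harmless (two everywhere-opposed linear orders are layered by a single layer), while aligning the two extensions on tied pairs manufactures spurious agreements. The variant $a\lo_1 b\eq_1 c$ with $b\eq_2 c\lo_2 a$ defeats the common-$\lhd$ clause in the same way.

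Note also that the obvious repair --- breaking each player's ties \emph{against} the other player's preference --- fails too: for $a\lo_1 b\lo_1 c$ and $b\lo_2 a$, $b\lo_2 c$, $a\eq_2 c$ (again out of pattern, since Player~$2$ has no strict $3$-chain and $|O|=3$), reversed tie-breaking yields $c\lo'_2 a$, hence $b\lo'_2 c\lo'_2 a$ against $a\lo'_1 b\lo'_1 c$, the main pattern once more. Here the unique layering is $\{a,b\}<\{c\}$, and the tie $a\eq_2 c$ straddles the layer boundary: it must be broken \emph{upward}, i.e.\ in the \emph{same} direction as Player~$1$, whereas ties inside a layer must be broken \emph{opposingly}. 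So the correct direction for breaking a given tie depends on the layer structure one is still trying to build, and no uniform pairwise lexicographic recipe --- yours, its reversal, or any fixed mixture --- can succeed. A workable route is to construct the layers first, directly from the out-of-pattern hypothesis (e.g.\ group outcomes forced together by genuine disagreements, and use \eqref{main pattern} and \eqref{sec pattern} to show the resulting blocks are totally ordered and contain no agreement pair), and only then linearise, opposing the two players inside each layer and following the layer order across layers; your $(2)\Rightarrow(3)$ step then recovers statement $(3)$. Your instinct that the two-player restriction is essential here is right, but the construction exploiting it must be global, not a pairwise tie-breaking rule.
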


\paragraph{Termination of the \CDyn} Equipped with these preliminary
results, we can now characterise termination of the \CDyn, as shown in
\tablename~\ref{tab:results-I}.  As we have seen with the example in
\figurename~\ref{fig:twoplayer}, the termination of the \CDyn{} is not
as simple as the termination of \LDyn{} or \ADyn. When we consider
that the preferences of the game are strict weak orders, we will see
that `can be layered' is a necessary condition, and `out of pattern'
is a sufficient condition. However, none of these characterisation are
sufficient and necessary condition.

\begin{theo}\label{thm : SWO implications}
  Let $O$ be a finite set of outcomes, $N$ be a finite set of players
  and $(\wo_i)_{i\in N}$ be strict weak orders.  We have the following
  implications:
  $(\ref{item:1})\Rightarrow (\ref{item:2}) \Rightarrow (\ref{item:3})
  \Rightarrow (\ref{item:4})$, $(\ref{item:2})\nRightarrow (\ref{item:1})$ and
  $(\ref{item:4}) \nRightarrow (\ref{item:3})$ where:
  \begin{enumerate}[(1)]
  \item\label{item:1} $(\wo_i)_{i\in N}$ can be layered for $O$.
  \item\label{item:2} The \CDyn{} terminates in all games built over
    $O$, $N$ and $(\wo_i)_{i\in N}$.
  \item\label{item:3} All games built over $O$, $N$ and
    $(\wo_i)_{i\in N}$ admit an \CEq.
  \item\label{item:4} $(\wo_i)_{i\in N}$ is out of pattern for $O$.
\end{enumerate}
\end{theo}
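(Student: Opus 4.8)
The plan is to exploit the observation, made just above the statement, that the terminal profiles of \CDyn{} are exactly the \CEqs{}; this is the bridge between the dynamical statement~(\ref{item:2}) and the equilibrium statement~(\ref{item:3}). I would first dispose of the easy implication $(\ref{item:2})\Rightarrow(\ref{item:3})$. For a fixed game $G$, the set $\StratG$ is finite, so the dynamics graph $(\StratG,\Cdyn)$ is finite and termination is equivalent to acyclicity; a finite acyclic graph always has a sink, i.e.\ a terminal profile. Since $\StratG\neq\emptyset$, such a profile exists and, by the observation above, it is an \CEq. Hence every game over $O,N,(\wo_i)_{i\in N}$ admits an \CEq.

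The core of the theorem is $(\ref{item:1})\Rightarrow(\ref{item:2})$, and I expect it to be the main obstacle. The idea is to turn a layering into a well-founded measure that decreases along \CDyn. Fix a layering $\{O_\lambda\}_{\lambda\in I}$ with layer order $<$. The first step is to show that the layer of the global outcome is non-decreasing: if $s\Cdyn s'$ then every player who deviates satisfies $\outcome{s}\lo_i\outcome{s'}$, so by Point~\ref{item:8} the outcome of $s'$ cannot sit in a strictly lower layer than that of $s$. As there are finitely many layers, any infinite \CDyn{}-run eventually stays within a single layer $O_\lambda$. The second, harder step is termination \emph{within} a fixed layer. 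Here I would use Point~\ref{item:9}, which forbids inside a layer any pair of players that simultaneously agree on one pair of outcomes and conflict on another; one natural route is to produce, inside each layer, a single strict order along which every improving lazy coalition move is monotone (its existence is what Point~\ref{item:9} should guarantee), or, failing that, to adapt the per-player potential functions of Le~Roux and Pauly (\cite[Section~5]{LeRouxDynamics}). The subtlety, absent from the $\{I,L,1P\}$ setting, is that several players may move simultaneously, so I must argue that Point~\ref{item:9} aligns coalition members enough to keep a single measure strictly decreasing. Combining the two steps (the layer strictly increases, or is constant and the within-layer measure strictly decreases) gives a lexicographic well-founded ranking, hence termination in all games.

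For $(\ref{item:3})\Rightarrow(\ref{item:4})$ I would argue by contraposition: if $(\wo_i)_{i\in N}$ is \emph{not} out of pattern, it realises the main pattern or the secondary pattern, and in each case I exhibit a game with no \CEq. For the main pattern ($x\lo_i y\lo_i z$ and $y\lo_j z\lo_j x$) the witness is essentially the tree of \figurename~\ref{fig:twoplayer}, restricted to players $i,j$ and outcomes $x,y,z$, which was already seen to have no \CEq; for the secondary pattern I build a slightly larger tree over $w,x,y,z$ and check, going through the candidate profiles, that every profile admits an improving lazy coalition. The only real work is the case analysis verifying the absence of \CEq.

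Finally, the two non-implications separate the notions and, by the chain just proved, must both be witnessed by preferences that are out of pattern but not layerable --- a phenomenon that appears only for strict weak orders with at least three players, as the example $x\lo_1 y\lo_1 z$, $y\lo_2 z\eq_2 x$, $y\eq_3 z\lo_3 x$ (already shown to be out of pattern yet not layerable) illustrates. For $(\ref{item:4})\nRightarrow(\ref{item:3})$ I exhibit such preferences together with a game admitting no \CEq. For $(\ref{item:2})\nRightarrow(\ref{item:1})$ I exhibit preferences that cannot be layered but for which \CDyn{} nonetheless terminates in every game, proving termination directly (non-layerability already falsifies~(\ref{item:1})). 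I expect the within-layer coalition analysis of $(\ref{item:1})\Rightarrow(\ref{item:2})$ to be the genuinely hard part; the remaining implications are either structural (finite acyclic graphs have sinks) or reduce to finite, if tedious, game-by-game verifications.
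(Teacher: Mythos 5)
Your overall architecture matches the paper's on everything except the core implication: you prove $(\ref{item:2})\Rightarrow(\ref{item:3})$ via sinks of a finite acyclic graph plus the terminal-profiles-are-\CEqs{} observation (exactly the paper's argument), you prove $(\ref{item:3})\Rightarrow(\ref{item:4})$ by contraposition using the game of \figurename~\ref{fig:twoplayer} for the main pattern and a four-outcome tree for the secondary pattern (again as in the paper), and your remark that both non-implications must be witnessed by preferences that are out of pattern but not layerable is correct and consistent with the paper's witnesses. Your first step of $(\ref{item:1})\Rightarrow(\ref{item:2})$ --- the layer of the outcome is non-decreasing along \CDyn{} by Point~\ref{item:8}, so an infinite run eventually stays in one layer $O_\lambda$ --- is also the paper's first step. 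The genuine gap is in the within-layer step. Your primary route, a single strict order on $O_\lambda$ along which every coalition move is monotone, cannot exist in general: Point~\ref{item:9} explicitly permits a layer on which two players have exactly opposite strict preferences (e.g.\ layer $\lambda$ in \tablename~\ref{table : can be layered}), and then moves of the two players improve in opposite directions, so no single order makes all moves monotone. Indeed, if such an order always existed, termination of the Lazy Improvement Dynamics for two antagonistic players would be trivial, whereas it requires the non-trivial potential argument of \cite{LeRouxDynamics}. Your fallback --- adapting the per-player potentials to simultaneous coalition moves --- is precisely the hard point, and you leave it unresolved.

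The paper closes this gap with a different device that your proposal is missing. Once the run has settled in $O_\lambda$, Point~\ref{item:9} forces any two players who are not indifferent inside $O_\lambda$ to either agree on all comparable pairs or disagree on all of them, since mixed behaviour realises the forbidden configuration $x\lo_i y\wedge x\lo_j y\wedge w\lo_i z\wedge z\lo_j w$. The players thus split into two antagonistic teams (those agreeing with player~$1$, and the rest); moreover no improving coalition can straddle both teams, because two cross-team members jointly improving on the same pair of outcomes, together with any pair on which they disagree, realises the pattern again. One can therefore merge each team into a single player and translate the within-layer \CDyn{} moves into moves of the $\{I,L,1P\}$-Dynamics in a derived two-player game $G'$, whose termination is Le Roux and Pauly's theorem, invoked as a black box. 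This team-merging reduction is the idea your lexicographic ranking lacks: without it (or a genuinely new potential argument covering simultaneous multi-player updates), your measure has no well-founded second component, and the proof of $(\ref{item:1})\Rightarrow(\ref{item:2})$ does not go through.
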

\begin{proof}[Sketch of proof]
  \fbox{$(\ref{item:1})\Rightarrow (\ref{item:2})$} For this point, the
  idea consists in reducing to a two player game where both players
  play lazily and do not form coalitions. This is the point of the
  proof where we exploit heavily the properties of layers: as
  $(\wo_i)_{i\in N}$ can be layered for $O$, we know that no player or
  coalition of players will make a change in order to reach an outcome
  which is in a lower layer than the current outcome. Indeed, by
  definition of layers, every player prefers every outcome of an upper
  layer to any outcome of a lower layer. Then, we can consider that,
  at some point along the dynamics, we will reach a layer and never
  leave it, because the game is finite. Let us denote it by
  $O_{\lambda}$.

  Moreover, in that layer, we can make two teams of players. Those who
  agree with player $1$, and the others. Indeed, inside a layer, we
  have $\neg (x\lo_i y\wedge x\lo_j y\wedge w\lo_i z \wedge z\lo_j w)$
  for all $i,j\in N$ and for all $w,x,y,z\in O_{\lambda}$.  We can
  then regard the first team as a single player and build a game $G'$
  with two players that never make coalitions, and play lazily. By
  \cite{LeRouxDynamics}, we know that this dynamics terminates for the
  game $G'$ and conclude that the \CDyn{} terminates for the game $G$.
 
  \fbox{$(\ref{item:2})\Rightarrow (\ref{item:3})$} We know that the
  terminal profiles are the SNEs, by definition. Thus, if the dynamics
  terminates, the exists an SNE in the game.

  \fbox{$(\ref{item:3})\Rightarrow (\ref{item:4})$} If $O$ is not out of
  the main pattern, let us consider the game in
  \figurename~\ref{fig:twoplayer}. His associated graph, given by
  \figurename~\ref{fig:graphs} (right), has no terminal node. It means
  that there is no \CEq{}. In the case where $O$ is not out the
  secondary pattern, we consider the game and his associated graph in
  \figurename~\ref{fig : counter examples} (left). The graph has no
  terminal node, so the game has no \CEq{}.

  \fbox{$(\ref{item:2})\nRightarrow (\ref{item:1})$} A counter-example to
  this implication goes as follows. Consider $O=\{x,y,z\}$,
  $N =\{1, 2, 3\}$ and $\wo$ s.t. $y\lo_1 x\eq_1 z$, $z\lo_2 x\lo_2 y$
  and $x\lo_3z\lo_3y$. Clearly, these preferences cannot be layered,
  but we claim that, in all games built over $O$, $N$ and
  $\wo_{1,2,3}$, the \CDyn{} terminates. Indeed, observe that, as
  player $1$ can not make any coalition, he will update his strategy
  only a finite number of times. After that, either player $2$ and
  player $3$ will form a coalition in order to reach $y$, then will
  stop updating their strategy changing; or they will not. In this 
  case, we can consider that they update according to \LDyn{}. Then, 
  by \cite[Theorem 10]{LeRouxDynamics}, this dynamics terminate.

  \fbox{$(\ref{item:4})\nRightarrow (\ref{item:3})$} The game in
  \figurename~\ref{fig : counter examples} (right) is an example where
  the preferences are out of pattern and the graph associated to the
  \CDyn{} has no terminal node, hence the game admits no \CEq{}.
\end{proof}

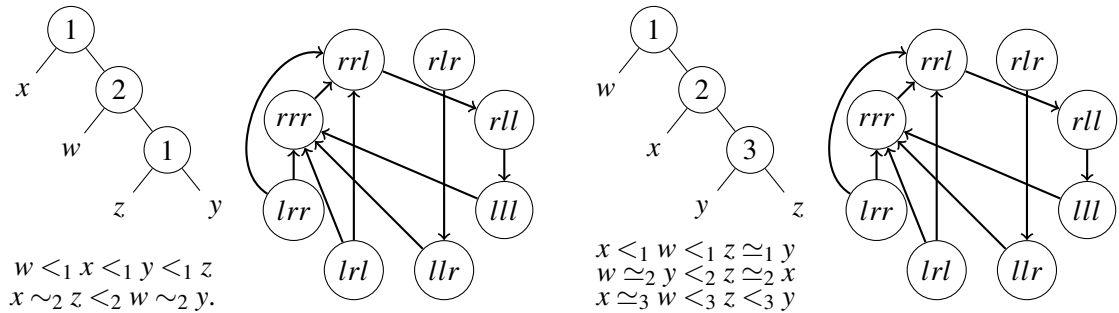
\begin{figure}
  \centering
    \begin{tikzpicture}[scale=.8]
      \begin{scope}
        \draw (0.8,4) node [circle, draw, inner sep=3pt] (e) {$ 1$};
        \draw (0,3) node (w) {$x$};
        
        \draw (1.6,3) node [circle, draw, inner sep=3pt] (a) {$2 $};
        \draw (0.8,2) node (aa) {$w $};
        \draw (2.4,2.) node  [circle, draw, inner sep=3pt] (ab) {$1 $};
        \draw (1.6,1) node  (abc) {$z $};
        \draw (3.2,1) node (aba) {$y $};
        
        \draw  (e) to (a);
        \draw (e) to (w);
        \draw  (a) to (aa);
        \draw  (a) to (ab);
        \draw  (ab) to (abc);
        \draw  (ab) to (aba);
        \draw (1.5,0) node { $w\lo_1 x\lo_1 y\lo_1 z $};
        \draw (1.5,-.5) node {$x\eq_2 z \lo_2 w\eq_2 y$.};
      \end{scope}
      \begin{scope}[xshift=4.5cm]
        \draw (0,2.5) node [circle, draw, inner sep=3pt] (s1) {$rrr$};
        \draw (1,3.5) node [circle, draw, inner sep=3pt](s2) {$rrl$};
        \draw (2.5,3.5) node [circle, draw, inner sep=3pt] (s3) {$rlr$};
        \draw (3.5,2.5) node [circle, draw, inner sep=3pt](s4) {$rll$};
        \draw (0,1) node  [circle, draw, inner sep=3pt] (s5) {$lrr$};
        \draw (1,0) node  [circle, draw, inner sep=3pt](s6) {$lrl$};
        \draw (2.5,0) node [circle, draw, inner sep=3pt](s7) {$llr$};
        \draw (3.5,1) node [circle, draw, inner sep=3pt](s8) {$lll$};
        
        
        \draw [->][thick] (s8) to (s1);
        \draw [->][thick] (s1) to (s2);
        \draw [->][thick] (s2) to (s4);
        \draw [->][thick] (s4) to (s8);
        
        \draw [->][thick] (s3) to (s7);
        \draw [->][thick] (s5) to (s1);
        
        \draw [->][thick,bend left=80] (s5) to (s2);
        \draw [->][thick] (s7) to (s1);
        
        \draw [->][thick] (s6) to (s2);
        \draw [->][thick] (s6) to (s1);
      \end{scope}
    \end{tikzpicture}
    \hspace*{5mm}
   \begin{tikzpicture}[scale=.8]
     \begin{scope}
       \draw (0.8,4) node [circle, draw, inner sep=3pt] (e) {$ 1$};
       \draw (0,3) node (w) {$w$};
       
       \draw (1.6,3) node [circle, draw, inner sep=3pt] (a) {$2 $};
       \draw (0.8,2) node (aa) {$x $};
       \draw (2.4,2.) node  [circle, draw, inner sep=3pt] (ab) {$3 $};
       \draw (1.6,1) node  (abc) {$y $};
       \draw (3.2,1) node (aba) {$z $};
       
       \draw  (e) to (a);
       \draw (e) to (w);
       \draw  (a) to (aa);
       \draw  (a) to (ab);
       \draw  (ab) to (abc);
       \draw  (ab) to (aba);
       
       \draw (1.5,0.3) node { $x\lo_1 w \lo_1 z \simeq_1 y$};
       \draw (1.5,-0.1) node {$w \simeq_2y\lo_2z\simeq_2 x$};
       \draw (1.5,-.5) node {$x\simeq_3w\lo_3z \lo_3 y$ };
     \end{scope}
     \begin{scope}[xshift=4.5cm]
       
       \draw (0,2.5) node [circle, draw, inner sep=3pt] (s1) {$rrr$};
       \draw (1,3.5) node [circle, draw, inner sep=3pt](s2) {$rrl$};
       \draw (2.5,3.5) node [circle, draw, inner sep=3pt] (s3) {$rlr$};
       \draw (3.5,2.5) node [circle, draw, inner sep=3pt](s4) {$rll$};
       \draw (0,1) node  [circle, draw, inner sep=3pt] (s5) {$lrr$};
       \draw (1,0) node  [circle, draw, inner sep=3pt](s6) {$lrl$};
       \draw (2.5,0) node [circle, draw, inner sep=3pt](s7) {$llr$};
       \draw (3.5,1) node [circle, draw, inner sep=3pt](s8) {$lll$};
       
       
       \draw [->][thick] (s8) to (s1);
       \draw [->][thick] (s1) to (s2);
       \draw [->][thick] (s2) to (s4);
       \draw [->][thick] (s4) to (s8);
       
       \draw [->][thick] (s5) to (s1);
       \draw [->][thick,bend left=80] (s5) to (s2);
       \draw [->][thick] (s6) to (s1);
       \draw [->][thick] (s6) to (s2);
       \draw [->][thick] (s7) to (s1);
       \draw [->][thick] (s3) to (s7);
     \end{scope}
   \end{tikzpicture}
   \caption{Two counter-examples with their associated graphs.}
   \label{fig : counter examples}
\end{figure}

Let us now finish by considering our two particular cases, in which we
know that the `layerability' of the order is \emph{equivalent} to the absence
of pattern, namely when preferences are strict linear order
(Proposition~\ref{prop : result from Le Roux paper}), in two player
games (Proposition~\ref{prop : result from Le Roux paper}). Then:

\begin{cor}\label{cor:termination-cdyn}
  Let $O$ be a finite set of outcomes, $N$ be a finite set of players
  and $(\lo_i)_{i\in N}$ be preferences. When \emph{either}
  $N=\{1,2\}$ and $\wo_{1}, \wo_{2}$ are strict weak orders; \emph{or}
  $(\lo_i)_{i\in N}$ are strict linear orders, the following 
  are equivalent:
  \begin{inparaenum}[(1)]
  \item The \CDyn{} terminates in all games built over $O$, $N$ and
    $(\lo_i)_{i\in N}$;
  \item All game built over $O$, $N$ and $(\lo_i)_{i\in N}$ admit an \CEq;
  \item $(\lo_i)_{i\in N}$ is out of pattern for $O$.
  \end{inparaenum}
\end{cor}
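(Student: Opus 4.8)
The plan is to obtain this corollary essentially for free, by combining Theorem~\ref{thm : SWO implications} with the two equivalences between ``out of pattern'' and ``can be layered'' that hold in the special cases under consideration. Recall that Theorem~\ref{thm : SWO implications} already establishes, for \emph{arbitrary} strict weak orders, the chain of implications: (can be layered) $\Rightarrow$ (\CDyn{} terminates) $\Rightarrow$ (an \CEq{} exists) $\Rightarrow$ (out of pattern). In the notation of the present Corollary these read (layered) $\Rightarrow (1) \Rightarrow (2) \Rightarrow (3)$. Hence the whole statement will follow if I can show that, in each of the two special cases, being out of pattern is in fact equivalent to being layerable: this supplies the one missing implication and closes the chain into a cycle, forcing conditions (1), (2) and (3) to coincide.

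First I would treat the case of strict linear orders. Here Proposition~\ref{prop : result from Le Roux paper} states precisely that $(\lo_i)_{i\in N}$ is out of pattern for $O$ if and only if it can be layered for $O$; thus condition~(3) of the Corollary is equivalent to layerability. Second, for the two-player case with strict weak orders, the relevant equivalence is supplied by Proposition~\ref{prop : 2players Layer extension}, whose statements~(1) and~(3) assert that for $N=\{1,2\}$, the order $\wo_{1,2}$ is out of pattern for $O$ if and only if it can be layered for $O$. In either case I thereby obtain the implication $(3) \Rightarrow (\text{can be layered})$ that the general Theorem~\ref{thm : SWO implications} does \emph{not} provide.

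The argument then concludes identically in both cases: composing $(3) \Rightarrow (\text{layered})$ with the chain inherited from Theorem~\ref{thm : SWO implications} yields the cycle $(3) \Rightarrow (\text{layered}) \Rightarrow (1) \Rightarrow (2) \Rightarrow (3)$, so that conditions~(1), (2) and~(3) are pairwise equivalent. I do not anticipate any genuine obstacle, since all the substantive work has already been carried out in the cited results; the only points requiring care are to invoke the correct equivalence for each hypothesis (Proposition~\ref{prop : result from Le Roux paper} for strict linear orders, Proposition~\ref{prop : 2players Layer extension} for two-player strict weak orders), and to observe that the implication $(2)\Rightarrow(1)$ of Theorem~\ref{thm : SWO implications}, which fails in general, is simply not needed here, precisely because layerability and out-of-pattern collapse under the present restrictions.
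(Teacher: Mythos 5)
Your proposal is correct and follows essentially the same route the paper intends: the chain (layered) $\Rightarrow$ (1) $\Rightarrow$ (2) $\Rightarrow$ (3) from Theorem~\ref{thm : SWO implications} is closed into a cycle by the equivalence of ``out of pattern'' and ``can be layered'' in each special case, via Proposition~\ref{prop : result from Le Roux paper} for strict linear orders (which are a special case of strict weak orders, so the theorem's chain does apply) and Proposition~\ref{prop : 2players Layer extension} for two-player strict weak orders. Note that you even correct a slip in the paper's lead-in text, which cites Proposition~\ref{prop : result from Le Roux paper} for the two-player case where Proposition~\ref{prop : 2players Layer extension} is the one actually needed.
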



\bibliographystyle{eptcs}

\bibliography{biblio}

\end{document}